\newcommand{\transp}{\mathsf{T}}
\DeclareMathOperator{\Tr}{Tr}
\newcommand{\process}[1] { \{ #1 \}_{t\geq 1}}
\theoremstyle{plain}
\newtheorem{theorem}{Theorem}[section]
\newtheorem{corollary}[theorem]{Corollary}
\newtheorem{lemma}[theorem]{Lemma}
\newtheorem{assumption}{Assumption}
\theoremstyle{definition}
\newtheorem{definition}{Definition}
\theoremstyle{remark}
\tikzstyle{block} = [draw, fill=blue!20, rectangle, minimum height=3em, minimum width=6em]
\tikzstyle{sum} = [draw, fill=blue!20, circle, node distance=1cm]
\tikzstyle{input} = [coordinate]
\tikzstyle{output} = [coordinate]
\tikzstyle{pinstyle} = [pin edge={to-,thin,black}]
\tikzstyle{virtual} = [coordinate]
\def\keywordfont{\normalfont\fontsize{9}{11}\selectfont\leftskip2pc\rightskip2pc plus1fill}%
\def\titlefont{\fontsize{18}{22}\selectfont\bfseries\centering}%
\newenvironment{keywords}{\global
  \keywordfont
  \noindent{{\itshape{Keywords\/}}:}
}
\title{\titlefont LQG for portfolio optimization}
\date{}
\author[1]{M. ABEILLE}
\author[2]{E. SERIE}
\author[3]{A. LAZARIC}
\author[4]{X. BROKMANN}
\affil[1]{Inria Lille - Nord Europe, 40 avenue Halley, 59650 Villeneuve d'Ascq, France}
\affil[2]{Capital Fund Management, 23 rue de l'Universite, 75007 Paris, France}
\affil[3]{Inria Lille - Nord Europe, 40 avenue Halley, 59650 Villeneuve d'Ascq, France}
\affil[4]{Variance Capital LLP - 52 Upper Street, London N1 0QH}
\begin{document}

\maketitle

\begin{abstract}

We introduce a generic solver for dynamic portfolio allocation problems when the market exhibits return predictability, price impact and partial observability. We assume that the price modeling can be encoded into a linear state-space and we demonstrate how the problem then falls into the LQG framework. We derive the optimal control policy and introduce analytical tools that preserve the intelligibility of the solution. Furthermore, we link the existence and uniqueness of the optimal controller to a dynamical non-arbitrage criterion. Finally, we illustrate our method using a synthetic portfolio allocation problem.

\end{abstract}

\begin{keywords}
Portfolio optimization; Impact; Return predictability; LQR; LQG; Riccati; State-space
\end{keywords}

\section*{Introduction}

Modern finance theory is often thought to have started with the mean-variance approach of Markowitz~\citep{markowitz1952portfolio}. This approach provides 
portfolio managers with a systematic treatment of the risk-return tradeoff by maximizing their own utility. 
This started intensive research further to develop the basic mean-variance theory. In particular, it raised questions about the relationship between risk and return, leading to the famous CAPM model~\citep{sharpe1964capital, jensen1972capital} as well as finer modeling for the risk structure and the return predictability~\citep{fama1993common}.
However, one of the limitations of these approaches is their inability to take transaction costs into account: 
in a multi-step setting, performing such a strategy may be highly suboptimal as the rebalancing cost can be worse than the expected gain. 
This observation, of crucial importance for 
practitioners, led to dynamic allocation rules where portfolio managers anticipate this additional cost and track the Markowitz position by constraining the
 turnover (see for instance~\cite{constantinides1979multiperiod, taksar1988diffusion, morton1995optimal, grinold2010signal}).\\
When the volume of the transaction is large compared to the available liquidity, another effect known as price impact induces transaction costs: 
the execution of a large order drastically changes the supply and demand and thus affects the price in an adverse manner. 
The understanding of the market impact and the way to minimize it is an important topic for large investors and a large amount of literature adresses this question from different 
perspectives. 
Motivated by stylized facts and empirical studies which stress that markets digest very slowly modifications induced by large trade ~\citep{bouchaud2008markets, brokmann2014slow}, ~\citep{mastromatteo2014agent, donier2014fully} derive a microstructure based model for the price impact from the dynamic of the latent order book. 
Following the work of~\citep{kyle1985continuous}, another stream of literature considers agent-based model to understand how information is incorporated into the prices and  how it affects the liquidity. 
Finally,~\citep{huberman2004price, gatheral2010no} study the effect of the price impact on the absence of price manipulation and derive various inequalities 
about the shape of the price impact function. On the other hand, a large part of the literature is dedicated to the minimization of the price impact. 
Two types of problem are usually considered: optimal execution (see~\cite{bertsimas1998optimal, almgren2001optimal, gueant2012optimal, obizhaeva2013optimal}) where investors seek to liquidate a given position within a certain period, and optimal allocation (see~\cite{garleanu2009portfolio, de2012optimal, garleanu2013dynamic, kallsen2013general, moreau2014trading}) where investors try to dynamically 
control a portfolio to maximize their risk-profit utility under price impact. \\

In this paper, we focus on the latter and address the optimal portfolio allocation when the market exhibits dynamical return predictability and price impact. We consider an investor who wants to dynamically allocate a portfolio of $N$ assets in order to optimize a muti-horizon Markowitz cost function. While closed-form solutions had been derived for specific returns modeling and quadratic transaction costs (see~\cite{grinold2010signal, garleanu2013dynamic}), we provide an approach for any linear and Markovian returns modeling. Our contributions are twofold. \textbf{1)} We show that the dynamical allocation problem can be turned into a Linear Quadratic Gaussian (LQG) control problem, and thus can be solved efficiently (see~\cite{kalman1960new,kalman1961new}). First, we introduce execution prices into the Profit and Loss (PnL) measure, which allows us to take into account the transaction costs induced by the dynamical price impact effects. As opposed to previous works, it raises naturally quadratic transaction costs and thus, we do not need to add a penalty term to the Markowitz cost function. Secondly, we make use of the linear and Markovian dynamic of the returns, translated in terms of linear state-space structure, to derive the LQG solution. \textbf{2)} We leverage the LQG theory and, in particular, the underlying Riccati equation theory (see~\cite{lancaster1995algebraic}), to rephrase the existence and uniqueness of the solution in terms of stationarity and non-arbitrage properties of the returns dynamic. The aim is to stress the one-to-one relationship between the portfolio and the LQG problem to take advantage of the LQG solvers while maintaining the financial intuition.\\

The structure of the paper is the following. In section~\ref{sec:setting-stage}, we introduce our multi-horizon mean-variance optimization problem and specify the PnL measure that we consider, in terms of decision and execution prices. We briefly present in section~\ref{sec:lq-theory} the LQG theory, both to introduce the notations and to recall the main results that we use. Section~\ref{sec:from-portf-contr} is dedicated to the reformulation of the portfolio allocation problem into an LQG control problem. We also map the LQG existence and uniqueness results  to the stationarity and non-arbitrage properties of the returns dynamic. Finally, in section~\ref{sec:example}, we illustrate our approach on a synthetic example. We specify the structure of the returns modeling, present the associated LQG solution and introduce graphical tools to analyze the optimal strategy.

\section{Setting the stage}
\label{sec:setting-stage}

We introduce in this section the dynamical Markowitz cost function that we design to take into account both the existence of transaction costs induced by price impact effects and the dynamical nature of the return predictability and the price impact. We leverage the standard risk-return trade-off of Markowitz~\citep{markowitz1952portfolio} and extend it in two directions: first, we modify the \textit{Profit and Loss} (PnL) measure adding execution prices which quantify the actual prices at which transactions are made and second, we address the optimization problem \textit{on a trajectory} considering a multi-horizon mean-variance cost function.\\

We consider an investor whose objective is to dynamically construct a portfolio of $N$ assets: at each time step, he can decide to rebalance his portfolio, using his current knowledge, in order to optimize his gain - encoded in the PnL measure - with respect to a cost function that represents the risk-return trade-off. The seminal work of Markowitz suggests to balance between minimizing the PnL variance (e.g., the risk) and maximizing the PnL expectation  (e.g., the return).\\ To take into account the transaction costs, we explicitly do the distinction between the \textit{decision prices} that are observed at every time step and used to take the trading decision and the \textit{execution prices}, at which transactions occur, that are observed after the trade.

\begin{definition}
Let $Q_t$ and $p_t$ be the $N-$dimensional vectors of inventory positions and decision prices at time $t$. Let $\bar{p}_{t+1}$ be the average execution prices of 
trades on the period $[t,t+1[$, we define the Profit and Loss (PnL) between time $t$ and $t+1$ in an accounting way as
\begin{equation}
PnL_{t,t+1} := Q_{t+1}^\transp p_{t+1} -  Q_{t}^\transp p_{t}  -  (Q_{t+1} - Q_{t})^\transp \bar{p}_{t+1}.
\label{eq:pnl_definition}
\end{equation}
\end{definition}
The introduction of two different prices has several major implications: first, the decision prices allow a local valuation of the gain as the quantity $Q_{t}^\transp p_{t}$ represents the current value of the portfolio at time $t$. Since there is no guarantee that the investor can liquidate his positions instantaneously at prices $p_t$, this valuation is unfortunately artificial. However, the introduction of the execution prices balances this artificial valuation since $(Q_{t+1} - Q_{t})^\transp \bar{p}_{t+1}$ represents the true price of a transaction. Considering round-trip trajectories, i.e sequence of $Q_t$ such that $Q_0 = Q_T = 0$, one gets
\begin{equation}
PnL_{0,T} = \sum_{t=0}^{T-1} PnL_{t,t+1} = - \sum_{t = 0}^{T-1} (Q_{t+1} - Q_{t})^\transp \bar{p}_{t+1},
\label{eq:mtm_round_trip}
\end{equation}
in which only the execution prices appear, stressing the fact that such a definition remains globally exact. Finally, the introduction of the execution prices allows us to take into account the transaction costs induced by the price impact directly into the PnL rather than adding a penalty term to the Markowitz cost function as presented in~\citep{garleanu2013dynamic}. We believe that this formulation offers more flexibility as well as a better understanding since the impact effects are modeled directly into the prices dynamic and thus naturally generates transaction costs.
For sake of convenience, we will directly model the prices evolution through the dynamics of their associated decision and execution returns.
\begin{definition}
Let $r^{dec}_{t+1}$ and $r^{exe}_{t+1}$ be respectively the decision and execution returns:
\begin{equation}
\begin{split}
r_{t+1}^{dec} &:= p_{t+1} - p_{t}, \\
r_{t+1}^{exe} &:= p_{t+1} - \bar{p}_{t+1}.
\end{split}
\label{eq:returns_definition}
\end{equation}
Let $q_{t} = Q_{t+1} - Q_{t}$ be the traded quantity at time $t$ (assuming full execution), the PnL equation~\eqref{eq:pnl_definition} can be expressed as
\begin{equation}
PnL_{t,t+1} = \begin{pmatrix}  Q_{t}^\transp &  q_t^\transp \end{pmatrix} \begin{pmatrix} r_{t+1}^{dec} \\ r_{t+1}^{exe} \end{pmatrix}.
\label{eq:pnl_definition_return}
\end{equation}
\end{definition}

To take into account the dynamical nature of the return predictability and price impact effects, the risk-return trade-off must be considered dynamically. We address time dependency by setting the cost function to be the sum over trajectories of local mean-variance Markowitz cost functions.
Formally, let $\mathcal{F}_t$ be a filtration that represents the accumulated information at time $t$, the aim is to find the best policy - i.e., a mapping from observations to positions - which minimizes the average future expected cost:
\begin{equation}
\min_{Q_0,Q_1,\dots} \lim_{T \rightarrow \infty} \frac{1}{T}
\mathbb{E} \left[ \sum_{t=0}^{T-1} \Big( \lambda  \mathbb{V}(PnL_{t,t+1} \hspace{1mm} | \hspace{1mm} \mathcal{F}_t) 
 - \mathbb{E}(PnL_{t,t+1} \hspace{1mm} | \hspace{1mm} \mathcal{F}_t ) \Big) \hspace{2mm} \Big| 
\hspace{2mm} \mathcal{F}_0 \right]
\label{eq:dynamic_markowitz_criterion}
\end{equation}
where the expectation is  taken regarding the stochastic processes $\process{r^{dec}_t}$ and $\process{r^{exe}_t}$ and conditionally to the accumulated information. The 
sequence $\process{Q_t}$ is seeken $\mathcal{F}_t$-adapted which means that decisions are taken at each time step as a function of past and present
 information.\\
We focus here on the average cost per state problem formulation (i.e. we consider the limit when $T \rightarrow \infty$) which is relevant when there is no natural cost-free termination state (see \cite{bertsekas1995dynamic}) and when discounting is inappropriate. Notice that it implies that we are interested in a stationary solution (i.e., time independent) which represents the investor's behavior far from initial conditions. The local mean-variance cost functions are parametrized with a risk-tuning factor $\lambda$ which controls the risk aversion and is chosen by the investor. Since we look for a stationary solution, $\lambda$ is assumed time-independent.

In order to solve this optimization problem, we need to model the environment dynamics, that is how past inputs (trades and noise) influence the future outputs (returns/prices and positions). We are interested here in the case where outputs depend linearly on past inputs and outputs, which can be encoded into a linear state-space formulation \footnote{Any Markovian linear system can be represented as a linear state-space which makes it quite flexible. In particular, it includes standard econometric modeling such as ARMA, ARMAX, factor model etc...}. This is motivated by the fact that it then falls into the LQG framework which provides us with exact solutions together with efficient solvers.\\
In the next section, we recall the LQG theory and then show how - under the linear state-space modeling - one can directly express \eqref{eq:dynamic_markowitz_criterion} into an LQG problem.


\section{LQ theory}
\label{sec:lq-theory}

We summarize in this section the main results of the Linear Quadratic Gaussian (LQG) control problem, whose theory is at the core of our portfolio allocation problem. The interested reader can find a complete analysis of LQG problems in~\citep{bertsekas1995dynamic} from the point of view of dynamic programming, the analysis of the underlying Riccati equations in~\citep{lancaster1995algebraic} and a treatment of the state space approach and the Kalman filtering estimation in~\citep{durbin2012time}, while the knowledgeable reader is only concerned with the notation and can move to Section~\ref{sec:from-portf-contr}.\\

An LQG problem consists in controlling an uncertain stochastic linear system to minimize a quadratic cost function. Formally, given a tuple of matrices $(A,B,C,\Sigma_x, \Sigma_y,Q,R,N)$ and a filtration collecting the observations $\process{y_t}$ of the system $\mathcal{F}_{t} =  \sigma( y_1,\dots,y_t)$, one seeks a stationary policy $\pi$ mapping the current information $\mathcal{F}_t$ to the control $ q_t$ which minimizes the cost function
\begin{equation}
J(\pi) = \lim_{T \rightarrow \infty} \frac{1}{T} \mathbb{E} \left( \sum_{t=1}^T \begin{pmatrix} x_t^\transp & q_t^\transp \end{pmatrix} \begin{pmatrix} Q & N \\ N^\transp & R \end{pmatrix} \begin{pmatrix} x_t \\ q_t \end{pmatrix}  \hspace{2mm} | \hspace{2mm} \mathcal{F}_0, q_s = \pi (\mathcal{F}_s), \forall 1 \leq s \leq T \right),
\label{eq:lqg_cost_function}
\end{equation}
where the dynamical input-output modeling is encoded into a linear state space
\begin{equation}
 \left\{
\begin{array}{llll}
  x_{t+1} &=& A x_t + B q_t + \epsilon_{t+1}^x, & \epsilon_{t+1}^x \sim \mathcal{N}(0,\Sigma_x) \\
  y_{t} &=&  C x_t + \epsilon_{t}^y, & \epsilon_{t}^y \sim \mathcal{N}(0,\Sigma_y) 
\end{array}\right.
\label{eq:lqg_dynamic}
\end{equation}
which models the relationship between inputs $q$, $\epsilon^x$, $\epsilon^y$ and observations (outputs) $y$. The dynamical effects are encoded through the internal state $x_t$ which is not observed in general \footnote{The internal state encodes the memory of the system and can be seen as a latent variable with no physical meaning. See~\cite{durbin2012time} for a discussion about the state-space representation.}. Before presenting the solution, we introduce the notions of \textit{stabilizability} and \textit{detectability} which intuitively ensure the system to be non-exploding and the observation process to be rich enough to estimate the internal state.
\begin{definition}
\label{def:stabilizability_detectability}
Let $A,B,C$ be matrices of size $n\times n$, $n \times p$, $d \times n$. 
\begin{description}[align=right,labelwidth=3cm]
\item [stabilizability] The pair $(A,B)$ is said to be stabilizable if there exists a matrix $K$ of size $p \times n$ such that $A +BK$ is asymptotically stable i.e. if all its eigenvalues lie in the open unit disc,
\item [detectability] the pair $(C,A)$ is said to be detectable if $(A^\transp, C^\transp)$ is stabilizable.
\end{description}
\end{definition}
 
\subsection{Separation Principle}

The main advantage of the LQG is the ability to treat separately the control and the estimation sub-problems thanks to the \textit{Separation Principle}. In the LQG case, under suitable assumptions, this heuristic is exact (see~\cite{bertsekas1995dynamic} section 5.2). The estimation part is first performed using Kalman filtering, and the optimal control for the fully observable case - known as the Linear Quadratic Regulator (LQR) - is then applied to the current estimate.
Formally, let $\hat{x}_{t} = \mathbb{E}(x_t |\mathcal{F}_{t} ) $ be the Kalman filter estimate at time $t$, and $K$ be the optimal control of the LQR problem, the optimal control of the LQG problem is given by 
\begin{equation}
\forall t \geq 1, \hspace{3mm} q_t = \pi ( \mathcal{F}_t) = K \hat{x}_{t}.
\label{eq:separation_principle}
\end{equation}
We illustrate the structure of the LQG controller with the block-diagram in Figure~\ref{fig:lqg_block_diagram}, present the LQR and Kalman results as well as the obtained Closed Loop System (CLS).
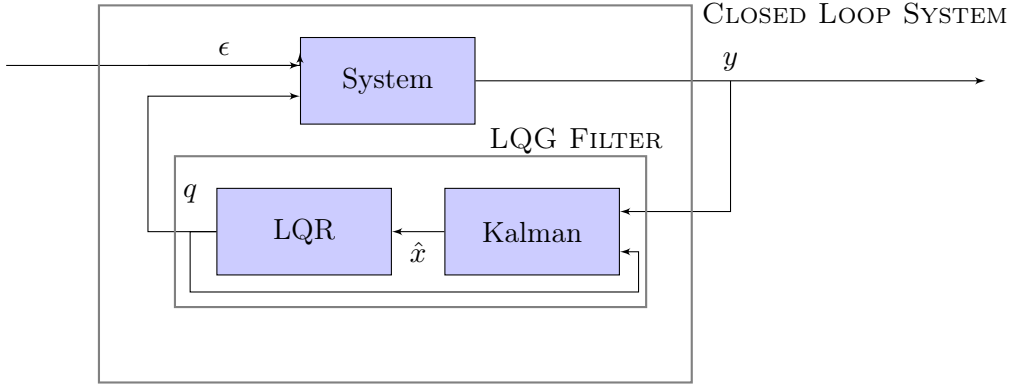
\begin{figure}[!h]
\begin{center}
\begin{tikzpicture}[auto, node distance=2cm,>=latex']

\node[block] (system) {System};
\node[virtual, left=of system.170,font=\bfseries] (aux1) {}; 
\node[left of=aux1] (aux3) {};
\node[virtual, left =of system.190,font=\bfseries] (aux2) {};
\node[name=yy, right of=system, node distance=4cm]{};
\node[right of=yy,font=\bfseries, node distance=4cm] (output) {};
\draw[->] (aux1) -- node [name=eps] {$\epsilon$}(system.170);
\draw[->] (system) -- node [name=y] {$y$}(output);
\node [name=kk,below of=yy]{};
\node[block, name=kalman, left of=kk,node distance=2.1cm]{Kalman};
\node[block, name=controller, left of=kalman, node distance=3cm]{LQR};
\draw [->] (y) |- (kalman.015);
\draw [->] (kalman) -- node [name=xhat] {$\hat{x}$} (controller);
\draw [->] (controller) -| (aux2) |- (system.190);
\draw [->] (aux3) -| (system.170);
\node[virtual, left of=controller, node distance=1.5cm] (aux4) {};
\node[name=q,above of=aux4, node distance = 0.5cm] (q) {$q$};
\node[virtual, below of=aux4,node distance=0.8cm](aux5) {};
\node[virtual, right of=aux5,node distance=5.9cm] (aux6) {};
\draw [->] (controller) -- (aux4) -- (aux5) -- (aux6) |-  (kalman.-015);

\draw [color=gray,thick](-2.8,-3) rectangle (3.4,-1);
\node at (1.2,-0.8) [below=5mm, right=0mm] {\textsc{LQG Filter}};

\draw [color=gray,thick](-3.8,-4) rectangle (4,1);
\node at (4,0.9) [below=5mm, right=0mm] {\textsc{Closed Loop System}};
\end{tikzpicture}
\caption{\label{fig:lqg_block_diagram} Illustration of the LQG controller: the state-space \eqref{eq:lqg_dynamic} defines the block \textit{System} which produces observations $y$ given input noise and control $q$. The observations $y$ are then used to feed the \textit{LQG} filter which consists in two filters - estimation and control - in series together with a feedback on input $q$. Finally, the obtained control $q$ is fed back into the \textit{System}. Thanks to the separation principle, the \textit{Kalman} and \textit{LQR} filter can be solved separately.}
\vspace{-10mm}
\end{center}
\end{figure}

\subsection{Kalman Filter and LQR}

The Kalman filter dates back to the seminal work of Kalman (see~\cite{kalman1960new,kalman1961new}) and consists in estimating the 
distribution of the internal state $x_t$ given measurements $y_t$. The gaussianity of the noise in~\eqref{eq:lqg_dynamic} reduces the estimation to $\hat{x}_{t}  = \mathbb{E}(x_t | \mathcal{F}_t)$ and $\Omega_x = \mathbb{V}( x_t | \mathcal{F}_t).$\footnote{ Since we focus on an average per cost problem \eqref{eq:lqg_cost_function}, we look for a steady-state estimation i.e. far from initial conditions. As a consequence, the variance estimate is time independent.}

\begin{theorem}(see~\cite{lancaster1995algebraic} th.17.5.3)
\label{th:kalman_filter}
Assume that the noise sequences $\process{\epsilon_{t+1}^x}$ and $\process{\epsilon_{t}^y}$ are conditionally Gaussian and mutually independent. Assume that the pair $(A,\Sigma_x)$ is stabilizable and that the pair $(C,A)$ is detectable, then, the steady-state solution of the Kalman filter is given by:

\begin{equation}
\begin{aligned}
&\left\{
\begin{aligned}
& \tilde{x}_{t+1} = A (I - LC) \tilde{x}_{t} + B q_t + A L y_t \\
&\hat{x}_{t} = (I - LC) \tilde{x}_{t} + L y_t
\end{aligned}
\right. \\
 \text{where } \hspace{5mm}
& \tilde{\Omega}_x = \Sigma^x + A \tilde{\Omega}_x A^\transp - A \tilde{\Omega}_x C^\transp \left( \Sigma^y + C \tilde{\Omega}_x C^\transp \right)^{-1}  C \tilde{\Omega}_x A^\transp, \\
 & \Omega_x = \tilde{\Omega}_x - L C \tilde{\Omega}_x, \hspace{5mm}   L = \tilde{\Omega}_x C^\transp \left( \Sigma^y + C \tilde{\Omega}_x C^\transp \right)^{-1}.
\end{aligned}
\label{eq:kalman_state_space}
\end{equation}
The matrix $L$ is called the Kalman gain, and the matrix $\tilde{\Omega}_x = \mathbb{V} (x_t | \mathcal{F}_{t-1})$ is the solution of a Riccati equation. Stabilizability and detectability ensure $\tilde{\Omega}_x$ to be unique, symmetric and semi-definite positive.
\end{theorem}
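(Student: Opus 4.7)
The plan is to derive the time-varying Kalman recursions first and then pass to the steady state by invoking the convergence theory of the associated discrete algebraic Riccati equation (DARE). Write $\hat{x}_t = \mathbb{E}(x_t \mid \mathcal{F}_t)$ for the filtered estimate, $\tilde{x}_t = \mathbb{E}(x_t \mid \mathcal{F}_{t-1})$ for the one-step predictor, and $\Omega_x^{(t)}$, $\tilde{\Omega}_x^{(t)}$ for the corresponding conditional covariances. The control $q_t$ is $\mathcal{F}_t$-measurable and drops out of all variance computations.

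\textbf{Step 1 (measurement update).} Conditional on $\mathcal{F}_{t-1}$, the pair $(x_t,y_t)$ is jointly Gaussian with means $(\tilde{x}_t, C\tilde{x}_t)$ and explicit cross-covariance $\tilde{\Omega}_x^{(t)} C^\transp$. The Gaussian conditioning formula then yields $\hat{x}_t = (I - L_t C)\tilde{x}_t + L_t y_t$ and $\Omega_x^{(t)} = (I - L_t C)\tilde{\Omega}_x^{(t)}$ with $L_t = \tilde{\Omega}_x^{(t)} C^\transp(\Sigma_y + C\tilde{\Omega}_x^{(t)} C^\transp)^{-1}$, which coincides with the claimed $L$ once the recursion on $\tilde{\Omega}_x^{(t)}$ has reached its steady state.

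\textbf{Step 2 (time update and Riccati recursion).} Applying the state equation gives $\tilde{x}_{t+1} = A\hat{x}_t + B q_t$ and $\tilde{\Omega}_x^{(t+1)} = A\Omega_x^{(t)} A^\transp + \Sigma_x$. Substituting the measurement update produces the asserted recursion $\tilde{x}_{t+1} = A(I - L_t C)\tilde{x}_t + B q_t + A L_t y_t$, and composing the two covariance updates yields the discrete matrix Riccati difference equation whose fixed points satisfy the DARE displayed in the theorem.

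\textbf{Step 3 (steady state: existence, uniqueness, positivity).} This is the main analytic obstacle. The idea is to exploit the duality between filtering and control: the filtering DARE for $(A,C,\Sigma_x,\Sigma_y)$ is, formally, the control DARE for $(A^\transp,C^\transp,\Sigma_y,\Sigma_x)$. Detectability of $(C,A)$ is equivalent to stabilizability of $(A^\transp,C^\transp)$ and gives the existence of a positive semi-definite solution $\tilde{\Omega}_x \succeq 0$, which one can obtain either via the invariant-subspace construction of the symplectic pencil or via monotone convergence of the Riccati iteration started at $0$. Stabilizability of $(A,\Sigma_x)$ then plays the role of a controllability-like condition on the noise input that rules out unobserved unstable modes; it is used to show that the closed-loop matrix $A(I - LC)$ associated with any positive semi-definite fixed point is Schur stable, and that such a stabilizing solution is unique. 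Convergence of the Riccati iteration from any symmetric positive semi-definite initial covariance to this unique fixed point closes the argument and validates the use of the \emph{steady-state} Kalman gain $L$ in the recursions of Step 1.

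The hard part is undoubtedly Step 3: Steps 1 and 2 are a direct application of Gaussian conditioning and the state equation, whereas the existence, uniqueness and convergence statements for the DARE require a nontrivial algebraic/spectral argument. Since the statement explicitly imports this from~\citep{lancaster1995algebraic} Theorem 17.5.3, my plan would be to carry out Steps 1 and 2 in full detail and to delegate Step 3 to that reference, only verifying that the stabilizability/detectability hypotheses listed here match the ones under which the cited theorem guarantees a unique positive semi-definite $\tilde{\Omega}_x$.
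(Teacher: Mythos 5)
Your proposal is correct and matches the paper's treatment: the paper does not prove this statement at all but imports it wholesale from \citet{lancaster1995algebraic} (Th.~17.5.3), and your Steps 1--2 are the standard Gaussian-conditioning derivation of the measurement/time updates, with the genuinely hard content (existence, uniqueness and Schur stability of the DARE fixed point under detectability of $(C,A)$ and stabilizability of $(A,\Sigma_x)$) correctly identified and delegated to the same reference. The only point worth flagging is that the noise-stabilizability hypothesis is usually stated for the pair $(A,\Sigma_x^{1/2})$ (or $(A,G)$ with $\Sigma_x = GG^\transp$) rather than $(A,\Sigma_x)$, but that imprecision is already in the paper's statement, not introduced by you.
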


When the internal state $x_t$ is fully observable i.e. $\mathcal{F}_t-$measurable, the LQG problem collapses into the LQR problem. Namely, one seeks a stationary policy $\pi^{lqr}$ mapping states $\process{x_{t}}$ to controls $\process{q_t}$ in order to minimize the following subproblem:

\begin{equation}
\begin{split}
J^{lqr}(\pi^{lqr}) &= \lim_{T \rightarrow \infty} \frac{1}{T} \mathbb{E} \left( \sum_{t=1}^T \begin{pmatrix} x_t^\transp & q_t^\transp \end{pmatrix} \begin{pmatrix} Q & N \\ N^\transp & R \end{pmatrix} \begin{pmatrix} x_t \\ q_t \end{pmatrix}
\hspace{2mm} | \hspace{2mm} x_1, q_t = \pi^{lqr}(\mathcal{F}_t) \right) \\ 
\text{subject to } & x_{t+1} = A x_{t} + B q_t + \epsilon^x_{t+1}.
\end{split}
\label{eq:lqr_problem}
\end{equation}

\begin{theorem}(see~\cite{lancaster1995algebraic} th.16.6.4)
\label{th:lqr}
Let  $\process{\epsilon^x_t}$ be a $\mathcal{F}_t-$martingale difference sequence. Assume that $(A,B)$ is a stabilizable pair. Assume that the cost matrix $\begin{pmatrix} Q & N \\ N^\transp & R \end{pmatrix}$ is symmetric positive definite\footnote{The existence theory of Riccati solutions has been intensively studied. In particular, two equivalent criterion - one based on matrix pencil regularity, one based on the Popov criterion - can be used to relax the positiveness hypothesis. We discuss this relaxation in the appendix.} , the optimal solution of \eqref{eq:lqr_problem} is given by
\begin{equation}
\begin{split}
q_t &= K x_t, \\
K &= -(R + B^\transp P B)^{-1} (B^\transp P A + N^\transp), \\
P &= Q + A^\transp P A - (A^\transp P B + N ) (R + B^\transp P B)^{-1} ( B^\transp P A  + N^\transp), \\
\end{split}
\label{eq:lqr_solution}
\end{equation}
and  $A + B K$ is asymptotically stable.
\end{theorem}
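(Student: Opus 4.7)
The plan is to solve the average-cost LQR problem by dynamic programming, seeking a relative value function $V$ and a scalar $c$ that together satisfy the Bellman equation
\begin{equation*}
V(x) + c = \min_{q}\left\{ \begin{pmatrix} x^\transp & q^\transp \end{pmatrix} \begin{pmatrix} Q & N \\ N^\transp & R \end{pmatrix} \begin{pmatrix} x \\ q \end{pmatrix} + \mathbb{E}\bigl[ V(Ax + Bq + \epsilon^x) \bigr] \right\},
\end{equation*}
with $c$ identified \emph{a posteriori} as the optimal long-run average cost.

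First I would try the quadratic ansatz $V(x) = x^\transp P x$ with $P$ symmetric positive semi-definite. Because $\epsilon^x$ is a martingale difference sequence with covariance $\Sigma_x$, the expectation expands as $\mathbb{E}[V(Ax+Bq+\epsilon^x)] = (Ax+Bq)^\transp P (Ax+Bq) + \Tr(P\Sigma_x)$, which is strictly convex in $q$ as soon as $R + B^\transp P B$ is positive definite. Completing the square yields the feedback $q^\star = Kx$ with the stated expression for $K$; substituting back and matching the quadratic form in $x$ recovers exactly the algebraic Riccati equation of the statement, while matching the constants delivers $c = \Tr(P\Sigma_x)$.

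The hard part is proving that such a $P$ exists and that the associated closed-loop matrix $A+BK$ is asymptotically stable. I would follow the classical route of Lancaster and Rodman: consider the Riccati recursion $P_{t+1} = Q + A^\transp P_t A - (A^\transp P_t B + N)(R + B^\transp P_t B)^{-1}(B^\transp P_t A + N^\transp)$ initialized at $P_0 = 0$. One checks by induction that the iterates are non-decreasing in $t$, while stabilizability of $(A,B)$ furnishes a stabilizing feedback $K_0$ whose induced quadratic cost provides a uniform upper bound on $P_t$. Monotone convergence then delivers a symmetric positive semi-definite limit $P$ solving the algebraic Riccati equation. The positive definiteness of the full cost matrix rules out the non-stabilizing Riccati solutions and pins down $P$ as the unique stabilizing one, via the detectability-type arguments underlying Theorem 16.6.4 in Lancaster--Rodman.

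Finally I would close the argument by a verification step. Given the stabilizing pair $(P, K)$, the Bellman inequality combined with a telescoping summation along any admissible trajectory $(x_t, q_t)$ yields
\begin{equation*}
\frac{1}{T}\mathbb{E}\left[\sum_{t=1}^{T} \begin{pmatrix} x_t^\transp & q_t^\transp \end{pmatrix} \begin{pmatrix} Q & N \\ N^\transp & R \end{pmatrix} \begin{pmatrix} x_t \\ q_t \end{pmatrix} \right] \geq \Tr(P\Sigma_x) + \frac{1}{T}\mathbb{E}\bigl[ V(x_1) - V(x_{T+1}) \bigr],
\end{equation*}
with equality precisely when $q_t = K x_t$. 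The martingale-difference property of $\epsilon^x$ discards the cross-terms arising in the telescoping step, while the stability of $A+BK$ keeps $\mathbb{E}[V(x_{T+1})]$ bounded under the candidate optimal policy, so the boundary term vanishes as $T\to\infty$. Thus the policy $q_t = K x_t$ attains the infimum, proving both the optimality claim and the asymptotic stability of $A+BK$.
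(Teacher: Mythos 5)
The paper does not actually prove this statement: Theorem~\ref{th:lqr} is imported verbatim from Lancaster and Rodman (th.~16.6.4), whose treatment is algebraic, built on the existence theory of the discrete algebraic Riccati equation via extremal solutions and invariant-subspace/matrix-pencil arguments. Your proposal instead reconstructs the result by the classical dynamic-programming route --- average-cost Bellman equation, quadratic ansatz, completion of the square, monotone Riccati iteration from $P_0=0$ bounded above via a stabilizing feedback, then a telescoping verification argument --- which is a legitimate and essentially complete alternative. Two remarks. First, to run the monotone-iteration and detectability arguments cleanly in the presence of the cross term $N$, you should make explicit the change of control variable $u_t = q_t + R^{-1}N^\transp x_t$, which reduces the problem to one with state cost $Q - NR^{-1}N^\transp \succ 0$ (the Schur complement, positive by the hypothesis on the full cost matrix) and drift $A - BR^{-1}N^\transp$; observability of the reduced state cost is what actually forces the limit $P$ to be the stabilizing solution. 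Second, your verification step only controls the boundary term $\mathbb{E}[V(x_{T+1})]/T$ under the candidate policy $q_t = Kx_t$; for the lower bound to hold against \emph{arbitrary} admissible policies you must also dispose of policies under which $\mathbb{E}\|x_T\|^2$ grows linearly --- this is where the strict positivity of $Q - NR^{-1}N^\transp$ is used again, since any such policy incurs unbounded average running cost and is therefore never competitive. With these two points made explicit your argument is sound, and it arguably gives more insight into why the theorem is true than the citation the paper relies on; note also that the paper's appendix pursues a third, frequency-domain (Popov) route precisely in order to drop the positive-definiteness hypothesis that your proof leans on.
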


Symmetrically to the Kalman solution, the matrix $K$ is called the control gain and depends on a matrix $P$ which is the solution of another Riccati equation.

\subsection{Closed Loop System}

Once the system is controlled using a stable filter which maps $y_t$ to $q_t$, we obtain a Closed Loop System (CLS) where input $q_t$ is eliminated. Since every block - or filter - involved in the LQG modeling is encoded using linear state-spaces, we can express the obtained CLS into an augmented state-space (we postpone the derivation in appendix). Formally, let $\mathcal{X}_t = \begin{pmatrix} x_t \\ \hat{x}_{t} \end{pmatrix}$ be the augmented state, the dynamics of the CLS is encoded into \\
\begin{equation}
\left\{
\begin{aligned}
\mathcal{X}_{t+1} &= \mathcal{A} \mathcal{X}_{t} + \mathcal{B} \mathcal{E}_{t+1}, \\
y_t &= \begin{pmatrix} C & 0 \end{pmatrix} \mathcal{X}_{t} + \epsilon_{t}^y,
\end{aligned}
\right.
\label{eq:closed_loop_system}
\end{equation}
with appropriate noise and matrices $\mathcal{E}$, $\mathcal{A}$, $\mathcal{B}$. This formulation offers the advantage of summarizing the behavior of the CLS in a single object. For instance, one can apply standard control tools to~\eqref{eq:closed_loop_system} to obtain analytical expressions for the steady-state variances $V := \mathbb{V}(x_t)$ and $\hat{V} = \mathbb{V}(\hat{x}_{t})$.

\begin{lemma}(see~\cite{lancaster1995algebraic} th.5.3.5)
\label{le:lyapunov}
Let $X_t$ be a stable process with dynamic $X_{t+1} = A X_{t} + \eta_t$ where $\eta_t$ is a martingale difference sequence with respect to $\process{X_t}$ with conditional variance $E$. Then its steady-state variance is given by the Lyapunov equation
\begin{equation}
\mathbb{V}(X_t) = A \mathbb{V}(X_t) A^\transp  + E.
\label{eq:lyapunov}
\end{equation}
Moreover, since $A$ is stable and $E$ semi-definite positive, equation \eqref{eq:lyapunov} admits a unique solution which shares the same signature as $E$.
\end{lemma}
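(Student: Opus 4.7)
The plan is to derive the Lyapunov equation from the recursion, then establish existence and uniqueness, and finally address the signature claim.

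First I would take the unconditional variance of both sides of $X_{t+1} = A X_t + \eta_t$. The martingale difference property means $\mathbb{E}[\eta_t | X_s, s\leq t]=0$, so by the tower property $\mathbb{E}[\eta_t X_t^\transp] = \mathbb{E}\bigl[\mathbb{E}[\eta_t | X_t]\, X_t^\transp\bigr] = 0$, i.e.\ $\eta_t$ and $X_t$ are uncorrelated. The cross terms in the variance vanish, yielding
\[ \mathbb{V}(X_{t+1}) = A\, \mathbb{V}(X_t)\, A^\transp + \mathbb{V}(\eta_t) = A\, \mathbb{V}(X_t)\, A^\transp + E, \]
where the conditional variance $\mathbb{V}(\eta_t | X_t) = E$ lifts to the same unconditional value by the tower property. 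Stability of $\process{X_t}$ allows passing to the steady state $V := \lim_{t\to\infty} \mathbb{V}(X_t)$, which satisfies $V = A V A^\transp + E$.

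For existence and uniqueness I would view the Lyapunov equation as a linear system in $V$. Vectorizing via $\text{vec}(A V A^\transp) = (A \otimes A) \text{vec}(V)$, it becomes $(I - A\otimes A)\, \text{vec}(V) = \text{vec}(E)$. The spectrum of $A\otimes A$ consists of the products $\lambda_i \lambda_j$ of eigenvalues of $A$, all lying strictly inside the open unit disc by stability, so $I - A\otimes A$ is invertible. This yields a unique $V$. Equivalently, iterating the recursion from any initial covariance and using the exponential decay $\|A^k\|\to 0$ gives the convergent series representation
\[ V = \sum_{k=0}^{\infty} A^k E (A^\transp)^k, \]
which provides a constructive proof of existence.

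Finally, for the signature claim, the series representation makes it immediate that $V$ inherits semi-definite positivity from $E$, since each summand $A^k E (A^k)^\transp$ is positive semi-definite whenever $E$ is. To extract the sharper inertia statement (same number of positive, negative and zero eigenvalues), I would invoke Stein's inertia theorem for the discrete-time Lyapunov operator $\mathcal{L}(V)=V - A V A^\transp$: when $A$ is stable, $\mathcal{L}$ preserves inertia, so $V$ and $E$ share the same signature. The main obstacle is this last step: while PSD-ness follows effortlessly from the series, matching the null spaces requires a more delicate argument tying the kernel of $V$ to an observability-type condition on $(A, E^{1/2})$, and ultimately relies on the full inertia theorem rather than the explicit summation.
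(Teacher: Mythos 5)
The paper does not actually prove this lemma; it is quoted from Lancaster and Rodman (th.~5.3.5), so your derivation can only be compared against the standard textbook argument. Up to the signature claim, your proof is correct and is exactly that standard argument: the martingale-difference property kills the cross terms in the variance recursion, stability lets you pass to the steady state, and existence plus uniqueness follow either from the invertibility of $I - A\otimes A$ (its eigenvalues are the products $\lambda_i\lambda_j$, all strictly inside the unit disc) or from the convergent series $V=\sum_{k\geq 0}A^k E (A^\transp)^k$. Nothing to change there.

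The genuine gap is the last step, and you have correctly sensed it but not closed it. Stein's inertia theorem does not say that the map $E\mapsto V$ preserves inertia; it relates the inertia of a Hermitian $H$ with $H-AHA^\transp>0$ to the location of the spectrum of $A$ relative to the unit circle, and it requires strict definiteness of the right-hand side. In fact the ``same signature'' conclusion, read as equality of the full inertia triple, is false for merely semi-definite $E$: take $A=e_2e_1^\transp$ (nilpotent, hence stable) and $E=e_1e_1^\transp$; then $V=E+AEA^\transp=I_2$ has rank $2$ while $E$ has rank $1$. What your series does give cleanly is $V\geq E\geq 0$, hence $\ker V\subseteq\ker E$ and $V$ is positive semi-definite --- which is all that is used downstream, since the $E$ appearing in the closed-loop system is a noise covariance and only PSD-ness of $\mathbb{V}(\mathcal{X}_t)$ is needed. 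To get anything sharper you must either assume $E>0$ (then $V>0$ and the inertias trivially agree) or identify $\mathrm{rank}\,V$ with the rank of the reachability matrix of the pair $(A,E^{1/2})$, which equals $\mathrm{rank}\,E$ only under an extra invariance condition. So keep everything through the series representation, replace the appeal to the inertia theorem by the observation $V\geq E$, and either weaken the signature claim to semi-definiteness or add the hypothesis that makes it true.
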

Applying lemma \ref{le:lyapunov} to the closed loop system~\eqref{eq:closed_loop_system} gives $\mathbb{V}(\mathcal{X}_t)$ from which $V$ and $\hat{V}$ can be extracted.


\section{From portfolio control to LQG}
\label{sec:from-portf-contr}

We present in this section the main contribution of the paper. We show that the dynamical Markowitz portfolio optimization problem~\eqref{eq:dynamic_markowitz_criterion} of Section~\ref{sec:setting-stage} can be rephrased as an LQG control problem as soon as the returns dynamics is encoded into a linear state-space. We leverage this reformulation to translate the existence and uniqueness results of the LQG in terms of stability and non-arbitrage properties of the underlying returns modeling. As a consequence, it concentrates the complexity of the portfolio allocation into the returns modeling rather than into the derivation of the solution.

\subsection{LQG construction}

We consider the dynamical portfolio optimization problem~\eqref{eq:dynamic_markowitz_criterion}. We assume that the trades are fully executed on the market i.e. $Q_{t+1} = Q_{t} + q_{t}$ so one seeks a policy $\pi$ mapping observations $\mathcal{F}_{t}$ to trades $\process{q_t}$ which minimizes
\begin{equation}
\lim_{T \rightarrow \infty} \frac{1}{T}
\mathbb{E} \left( \sum_{t=0}^{T-1} \lambda  \mathbb{V}(PnL_{t,t+1} \hspace{1mm} | \hspace{1mm} \mathcal{F}_t) 
 - \mathbb{E}(PnL_{t,t+1} \hspace{1mm} | \hspace{1mm} \mathcal{F}_t ) \hspace{2mm}| 
\hspace{2mm} \mathcal{F}_0, q_t = \pi( \mathcal{F}_{t}) \right),
\label{eq:dynamic_markowitz_trade}
\end{equation}
where $PnL_{t,t+1} = \begin{pmatrix}  Q_{t}^\transp &  q_t^\transp \end{pmatrix} \begin{pmatrix} r_{t+1}^{dec} \\ r_{t+1}^{exe} \end{pmatrix}$. As stated in the introduction, we restrict the input-output modeling to be encoded into a linear state-space. Formally, recalling \eqref{eq:lqg_dynamic}, we assume that there exists a state-space representation
\begin{equation*}
 \left\{
\begin{array}{llll}
  x_{t+1} &=& A x_t + B q_t + \epsilon_{t+1}^x, & \epsilon_{t+1}^x \sim \mathcal{N}(0,\Sigma_x) \\
  y_{t} &=&  C x_t + \epsilon_{t}^y, & \epsilon_{t}^y \sim \mathcal{N}(0,\Sigma_y) 
\end{array}\right.
\end{equation*}
where $y_t$ is the vector of observations available at time $t$ and contains the outputs $(Q_t, r_{t}^{dec}, r_{t}^{exe})$. The noise sequences $\process{\epsilon^{x}_t}$ and $\process{\epsilon^{y}_{t}}$ are assumed to be Gaussian and mutually independent. We denote as $\Pi^{Q}$, $\Pi^{dec}$ and $\Pi^{exe}$ the linear mappings from observations to outputs respectively: 
\begin{equation*}
Q_t = \Pi^{Q} y_t, \hspace{5mm} r_{t}^{dec} = \Pi^{dec} y_t, \hspace{5mm} r_{t}^{exe} = \Pi^{exe} y_t, \hspace{5mm} \forall t \geq 1
\end{equation*}
which implies that $PnL_{t,t+1} = \begin{pmatrix} Q_t^\transp & q_t^\transp \end{pmatrix} \begin{pmatrix} \Pi^{dec} \\ \Pi^{exe} \end{pmatrix} y_{t+1}$. Therefore, one gets,
\begin{equation}
\begin{array}{lll}
\mathbb{E} \left( PnL_{t,t+1}| \mathcal{F}_t \right) &=&  \begin{pmatrix} Q_t^\transp & q_t^\transp \end{pmatrix} \begin{pmatrix} \Pi^{dec} \\ \Pi^{exe} \end{pmatrix} \mathbb{E}(y_{t+1} | \mathcal{F}_{t}), \\
\mathbb{V} \left( PnL_{t,t+1} | \mathcal{F}_t\right) &=& \begin{pmatrix} Q_t^\transp & q_t^\transp \end{pmatrix} \begin{pmatrix} \Pi^{dec} \\ \Pi^{exe} \end{pmatrix} \mathbb{V}(y_{t+1} | \mathcal{F}_{t}) \begin{pmatrix} \Pi^{dec,\transp} & \Pi^{exe,\transp} \end{pmatrix} \begin{pmatrix} Q_t \\ q_t \end{pmatrix}, \\
\mathbb{E}( y_{t+1} | \mathcal{F}_t ) &=& C \left( A \mathbb{E}(x_t | \mathcal{F}_t) + B q_t \right),\\
\mathbb{V}( y_{t+1} | \mathcal{F}_t ) &=& \Sigma^y + C \left( \Sigma^x +  A \Omega_x A^\transp \right)C^\transp.
\end{array}
\label{eq:mean_variance_pnl}
\end{equation}
As opposed to the standard Markowitz approach where the risk (e.g. the variance of the PnL) comes from the unpredictable moves of the market (e.g. the noise sequences $\process{\epsilon^x_t}$, $\process{\epsilon^y_t}$), the risk is here augmented by an additive term which represents the uncertainty about the internal state of the system and hence takes into account the risk coming from the partial observability. This is clear when we look at the expression of $\mathbb{V}( y_{t+1} | \mathcal{F}_t )$: the variance induced by the noise $\Sigma^y + C \Sigma^x C^\transp$ is augmented by the additive variance $ C A \Omega_x A^\transp C^\transp$ which implies that the investor will be more and more risk-averse as the uncertainty about the system increases.\\
Finally, the local cost function of \eqref{eq:dynamic_markowitz_trade} is quadratic in the internal state $x_t$ and the trade $q_t$. Algebraic manipulation leads to:
\begin{equation*}
\lambda  \mathbb{V}(PnL_{t,t+1} \hspace{1mm} | \hspace{1mm} \mathcal{F}_t) 
 - \mathbb{E}(PnL_{t,t+1} | \mathcal{F}_{t} )  = \mathbb{E} \left( \begin{pmatrix} x_t^\transp & q_t^\transp \end{pmatrix} \begin{pmatrix} Q & N \\ N^\transp & R \end{pmatrix} \begin{pmatrix} x_t \\ q_t \end{pmatrix} |\mathcal{F}_{t} \right)
\end{equation*}
where
\begin{equation}
 \begin{pmatrix} Q & N \\ N^\transp & R \end{pmatrix} := \begin{pmatrix}  (\Pi^{Q} C)^\transp \Pi^{dec} & (CA)^\transp \\ \Pi^{exe} & (CB)^\transp \end{pmatrix} 
 \begin{pmatrix}  \lambda \Sigma & -\frac{1}{2} I \\ - \frac{1}{2} I & 0 \end{pmatrix} 
 \begin{pmatrix} \Pi^{dec,\transp} \Pi^Q C & \Pi^{exe,\transp} \\ CA & CB \end{pmatrix} 
 \label{eq:cost_matrices_expression}
\end{equation}
 with $\Sigma = \Sigma^y + C ( \Sigma^x +  A \Omega_x A^\transp)C^\transp$.\\

Since the local cost function is quadratic in the state and the trade, it turns the portfolio allocation problem~\eqref{eq:dynamic_markowitz_trade} into the LQG control problem ~\eqref{eq:lqg_cost_function}. This is implied by the definition of the dynamical Markowitz cost function~\eqref{eq:dynamic_markowitz_criterion} together with the linear state-space structure  without further assumptions on the returns modeling. Therefore, we benefit from a unique solver for a large class of returns modeling. The cost matrix ~\eqref{eq:cost_matrices_expression} is constructed directly from the linear state-space matrices, stressing the advantage of the introduction of the execution prices into the PnL definition: first, it naturally takes transaction costs into account and thus does not require to add it afterward to the cost function\footnote{For instance, in the case of instantaneous impact, we retrieve the same quadratic transaction cost  $q_t^\transp R q_t$ as in~\citep{garleanu2013dynamic}.}. Secondly, it allows for dynamical price impact effects which can be encoded into the internal state and be propagated into the cost function.
Finally, this approach shares the LQG advantages of an input/output formulation which are quantities available to the user, hiding the dynamic complexity into the state-space internal state. As illustrated in section \ref{sec:example}, this matrix formulation does not imply a lack of understanding as the system can be analyzed using standard control theory tools.

\subsection{Existence and uniqueness: a non-arbitrage criterion}
\label{subsec:existence_uniqueness}
To guarantee the existence and uniqueness of a solution to the portfolio allocation problem \eqref{eq:dynamic_markowitz_trade} we make the following assumptions about the state-space modeling \eqref{eq:lqg_dynamic} and discuss it in term of returns modeling. Then, we relax the positiveness assumption of theorem~\ref{th:lqr} and replace it with a non-arbitrage criterion.
The first assumption concerns the validity of the separation principle - at the core of the LQG solution.
\begin{assumption}
\label{as:separation_principle}
\textbf{Separation Principle: } The sequence of noises $\process{\epsilon^x_{t}}$ and $\process{\epsilon_t^y}$ are martingale difference sequences with respect to $\mathcal{F}_{t}$, conditionnally Gaussian with respective variances $\Sigma^x$ and $\Sigma^y$ and mutually independent.
\end{assumption}
The second assumption concerns the stability and the detectability of the state-space \eqref{eq:lqg_dynamic} which are required to guarantee both Kalman and LQR solutions.
\begin{assumption}
\label{as:stabilizability}
\textbf{Stabilizability and detectability} Let \eqref{eq:lqg_dynamic} be the state-space modeling the dynamical relationship between the trade $q$, the noise $\epsilon^x$ and the output $y$.
\begin{enumerate}
\item The pair $(A,\Sigma^x)$ is stabilizable and the pair $(C,A)$ is detectable.
\item The pair $(A,B)$ is stabilizable.
\end{enumerate}
\end{assumption}
The first point deals with the Kalman estimation and hence does not care about the sequence of trades $q$ which can be set to zero while the second point addresses the control structure and does not care about the noise sequences. Intuitively, the stabilizability of $(A,\Sigma^x)$ ensures the returns to be stationary under the absence of trading whereas the stabilizability of $(A,B)$ states that there exists trading policies which make the prices stationary. The detectability assumption guarantees that we can recover the meaningful part of the internal state from the sequence of observations and thus addresses the relevance of the information available to the investor. Hence, we see that the control assumptions meet the usual modeling assumptions of portfolio allocation problems.\\

On the other hand, the positiveness of the cost matrix in theorem~\ref{th:lqr} does not hold by construction of~\eqref{eq:cost_matrices_expression}. However, it can be relaxed provided that a Popov criterion is satisfied (see~\cite{molinari1975,lancaster1995algebraic}). To get more intuition, we translate this into a non-arbitrage criterion in line with~\citep{gatheral2010no}. First, we define formally the notion of round-trip sequence which is of crucial importance in our framework, since, according to~\eqref{eq:mtm_round_trip}, it only deals with the execution prices and thus overcomes the fallacious valuation induced by the decision prices.  
\begin{definition}
\label{def:admissible_round_trip}
We denote as $\mathcal{RT}$ the admissible round-trip sequence of trade $q$ where 
\begin{equation*}
\mathcal{RT} := \{ q = (q_0,q_1,\dots) \in l^1 \cap l^2 \hspace{3mm}  \text{such that} \hspace{3mm} Q = (Q_0,Q_1,\dots)  \in l^1 \cap l^2 \},
\end{equation*}
which implies that
\begin{equation*}
\mathcal{RT} \subset  \{ q = (q_0,q_1,\dots) \in l^1 \cap l^2 \hspace{3mm}  \text{such that} \hspace{3mm} \sum_{t=0}^\infty q_t = 0  \}.
\end{equation*}
\end{definition}
Such definition extends the usual round-trip that we consider, i.e. trade trajectories such that $Q_T = Q_0$, to the infinite horizon setting, by ensuring a suitable decay for the inventory positions. Since theorem~\ref{th:lqr} is concerned about the control part, the non-arbitrage criterion only focuses on the impact of trades on prices and thus does not consider the unpredictable moves of the prices. We postpone the proof of theorem~\ref{th:molinari_round_trip} in appendix~\ref{app:proof_non_arbitrage}.

\begin{theorem}
\label{th:molinari_round_trip}
 Assume that the pair $(A,B)$ is stabilizable and consider the non-stochastic
 $PnL_{t,t+1} = \begin{pmatrix} Q_t^\transp & q_t^\transp \end{pmatrix} \begin{pmatrix} r^{dec}_{t+1} \\ r^{exe}_{t+1} \end{pmatrix}$ where the positions, trades and returns are related by
\begin{equation*}
\left\{
\begin{array}{lll}
x_{t+1} &=& A x_{t} + B q_t,\\
y_{t} &=& C x_{t},\\
Q_t &=& \Pi^{Q} y_t, \hspace{5mm} r_{t}^{dec} = \Pi^{dec} y_t, \hspace{5mm} r_{t}^{exe} = \Pi^{exe} y_t, \hspace{5mm} \forall t \geq 1,
\end{array}
\right.
\end{equation*}
 then for any risk parameter $\lambda \in (0,\infty)$ there exists a (necessarily) unique symmetric stabilizing solution $P$ satisfying~\eqref{eq:lqr_solution} if and only if $\sum_{t=0}^\infty PnL_{t,t+1} \leq 0$ for any $q \in \mathcal{RT}$.
\end{theorem}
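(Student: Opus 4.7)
The strategy is to reduce the claim to the Popov/Molinari relaxation of Theorem~\ref{th:lqr} that the authors already mention: under $(A,B)$ stabilizable, the discrete algebraic Riccati equation~\eqref{eq:lqr_solution} admits a unique symmetric stabilizing solution $P$ if and only if the associated LQR quadratic cost sums to a non-negative value along every stabilizing input trajectory issued from $x_0=0$ (the frequency-domain form of this is the positive-realness of the Popov function on the unit circle). Once this relaxation is invoked as a black box, the remaining work is purely translational: identify the LQR cost with minus the PnL and the Popov test class with the round-trip class $\mathcal{RT}$.

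\textbf{Step 1 (cost matrix $=$ $-PnL$).} In the non-stochastic setting, the conditional variance in~\eqref{eq:mean_variance_pnl} vanishes and $\Omega_x=0$, so $\Sigma=0$ and the $\lambda\Sigma$ block in~\eqref{eq:cost_matrices_expression} disappears; in particular the criterion becomes $\lambda$-independent, as expected. A direct expansion using $y_{t+1}=C(Ax_t+Bq_t)$ and $Q_t=\Pi^Q C x_t$ gives
\[
PnL_{t,t+1}=\begin{pmatrix}x_t^\transp & q_t^\transp\end{pmatrix}\widetilde{M}\begin{pmatrix}x_t\\ q_t\end{pmatrix},\qquad \widetilde{M}:=\begin{pmatrix}(\Pi^Q C)^\transp\Pi^{dec}\\ \Pi^{exe}\end{pmatrix}\begin{pmatrix}CA & CB\end{pmatrix},
\]
and comparing with~\eqref{eq:cost_matrices_expression} yields the identity $\begin{pmatrix}Q & N\\ N^\transp & R\end{pmatrix}=-\tfrac{1}{2}(\widetilde{M}+\widetilde{M}^\transp)$. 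Hence the LQR running cost equals $-PnL_{t,t+1}$ and, by summation, the cumulative LQR cost along any trajectory equals $-\sum_{t\ge 0} PnL_{t,t+1}$.

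\textbf{Step 2 (match the trajectory classes).} Starting from $x_0=0$ and applying any $q\in\mathcal{RT}$ produces an internal state whose reachable component decays in $l^2$ (the $l^1\cap l^2$ summability of both $q$ and $Q=\Pi^Q Cx$ combined with stabilizability of $(A,B)$ controls the relevant part of $x$), while the unreachable component stays identically zero; hence $q\in\mathcal{RT}$ yields a valid Popov test trajectory. Conversely, any $l^2$ stabilizing input issued from $x_0=0$ gives exponentially decaying $x_t$, so $Q_t$ and $q_t$ are automatically in $l^1\cap l^2$ and define an admissible round-trip. The two classes therefore generate the same quadratic form, so positivity on one is equivalent to positivity on the other.

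\textbf{Chaining and main obstacle.} Steps~1 and~2 combine into the chain $\sum_{t\ge 0} PnL_{t,t+1}\le 0$ for every $q\in\mathcal{RT}$ $\iff$ LQR cost $\ge 0$ on the Popov test class $\iff$~\eqref{eq:lqr_solution} admits a unique symmetric stabilizing solution, which is the statement of the theorem. The main obstacle is Step~2: the Popov criterion in~\cite{molinari1975,lancaster1995algebraic} is usually stated for $l^2$ inputs with $x_0=0$, whereas $\mathcal{RT}$ imposes the strictly stronger $l^1\cap l^2$ summability on both $q$ and the inventory $Q$. I would bridge the gap by exploiting the exponential decay of closed-loop stabilizable responses (which upgrades $l^2$ to $l^1\cap l^2$ for free), supplemented if needed by a density argument showing the sign of the quadratic form is preserved under $l^2$ truncation/approximation. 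Step~1 is a one-line algebraic verification and the Popov/Molinari equivalence is cited rather than re-proved.
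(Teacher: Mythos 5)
Your proposal follows the same route as the paper's Appendix~B proof: reduce to the Popov/Molinari criterion for the indefinite-cost Riccati equation under stabilizability of $(A,B)$, identify the deterministic LQR running cost with $-PnL_{t,t+1}$ (your Step~1 is exactly the paper's observation that with $\Sigma=0$ the cost matrix~\eqref{eq:cost_matrices_expression} collapses to the symmetrization of minus the PnL quadratic form, making the criterion $\lambda$-independent), and identify the Popov test class of trajectories with $\mathcal{RT}$. The one place your sketch does not close is the nontrivial inclusion in Step~2: given $q\in l^1\cap l^2$ and $Q\in l^1\cap l^2$, you must show that the \emph{full} state $x$ lies in $l^1\cap l^2$ (this is how the admissible class $\mathcal{Q}$ of~\eqref{eq:lqr_problem_deterministic} is defined), and your appeal to ``exponential decay of closed-loop stabilizable responses'' plus a truncation/density argument really addresses the opposite, trivial inclusion ($x\in l^1\cap l^2$ implies $Q=\Pi^Q C x\in l^1\cap l^2$, hence $\mathcal{Q}\subset\mathcal{RT}$); your reachability-decomposition remark does not explain why the non-position components of $x$ decay when only $q$ and $Q$ are assumed summable. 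The paper handles this hard direction by z-transforms: writing $x(z)=(Iz-A)^{-1}B\,q(z)=(Iz-A)^{-1}B\,(z-1)Q(z)$, the factor $(z-1)$ coming from $q_t=Q_{t+1}-Q_t$ cancels the integrator pole of $(Iz-A)^{-1}$ at $z=1$, so $x_t\sim Q_t$ by the final value theorem and $x$ inherits the summability of $Q$. Everything else in your plan --- citing the Popov equivalence as a black box, the deterministic/stochastic Riccati identification, and the unresolved strict-versus-nonstrict positivity ($\Phi_K>0$ on the unit circle versus $\sum_t PnL_{t,t+1}\le 0$) --- matches the paper's own argument and its level of rigor.
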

Theorem~\ref{th:molinari_round_trip} maps the existence of a \textit{stabilizing} solution i.e. a solution $P$ to the Riccati equation so that the associated control $K$ stabilizes the system with the non-positiveness of the PnL associated with any round-trip trade trajectory. This mapping holds for the non-stochastic portfolio allocation problem and hence can be understood as a non-arbitrage criterion since the absence of noise induces the absence of prices predictability.\\
We have shown in this section how to turn the dynamical Markowitz allocation problem into an LQG control problem. Moreover, we exhibited the one-to-one relationship between underlying assumptions from control theory and the stationarity and non-arbitrage properties of the returns modeling. Apart from imposing the linear state-space structure, we did not specify the returns modeling thus allowing for many different price impact and return predictability dynamics. We illustrate our approach on a synthetic example in section~\ref{sec:example}.


\section{An example with separated alpha and impact}
\label{sec:example}
We illustrate our approach on a complete example of portfolio allocation problem when the return predictability and the impact effects are separated. We first describe the dynamics of the returns and show how to encode it into a state-space formulation. We discuss the LQG solution and stress how one can take advantage of the LQG theory to derive analytical expressions for the average performance of the strategy. Finally, we calibrate the parameters and introduce graphical tools to analyze the modeling and the optimal strategy.

\subsection{Prices and position modeling}
We consider a \lq separable model\rq  ~of the form:
\begin{equation}
r^{dec}_{t+1} = r^{p}_{t+1} + r^{i}_{t+1} + \sigma \epsilon^{r}_{t+1}, \hspace{5mm} \epsilon^{r}_{t+1} \overset{iid}{\sim} \mathcal{N}(0,1),
\label{eq:example_separable_return1}
\end{equation}
where $r^{dec}_{t+1} = p_{t+1} - p_{t}$ is the return of mark-to-market prices, 
$r^{p}$ and $r^{i}$ are two distinct processes modeling the predictable and the impact part respectively.
We assume that the predictable part is known at time $t$ and that it evolves as a mean-reverting process encoded into:
\begin{equation}
\left\{
\begin{array}{lll}
x^{p}_{t+1} &=& (1-\omega_{p}) x^{p}_{t} + \beta_p \epsilon^{p}_{t+1}, \hspace{5mm} \epsilon^{p}_{t+1} \overset{iid}{\sim} \mathcal{N}(0,1), \\
r^{p}_{t+1} &=& x^{p}_{t}.
\label{eq:example_alpha_model}
\end{array}
\right.
\end{equation}

The impact model is supposed to exhibit the following features: a trade moves the price up proportionally to the quantity but then 
relaxes slowly following an exponential decay. We also consider permanent impact which means that the relaxation does not push the price back to its original value. This is encoded in the impact part of the return $r^{i}$:
\begin{equation}
\left\{
\begin{array}{lll}
x^{i}_{t+1} &=& (1 - \omega_i) x^{i}_{t} +\omega_i \beta_i  q_t, \\
r^{i}_{t+1} &=& x^{i}_{t} + \gamma_{i} q_t.
\end{array}
\right.
\label{eq:example_impact_return}
\end{equation}
The state variable $x^{i}_{t}$ represents the memory of  past trades that are forgotten exponentially. Notice that this is not a quantity directly observed but that has to be reconstructed (from the sequence of past trades). We assume that the execution price follows:
\begin{equation}
\begin{split}
\bar{p}_{t+1} &= \eta p_{t} + (1 - \eta) p_{t+1} +  r^{i}_{t+1}, \\
r^{exe}_{t+1} &= p_{t+1} - \bar{p}_{t+1} = \eta r^{dec}_{t+1} - r^{i}_{t+1},
\end{split}
\label{eq:example_execution_price_model}
\end{equation}
where $r^{exe}_{t+1}$ is the execution return. This models a system where the execution is made on average in between the 
decision prices at $t$ and $t+1$ (parametrized by $\eta$) plus an additional cost which is equal 
to the impact part of the return. 
The inventory position follows the simple dynamic where every trades are fully executed on the market:
\begin{equation}
Q_{t+1} = Q_{t} + q_t.
\label{eq:example_position_dynamic}
\end{equation}

\subsection{State space formulation}

To make use of LQG techniques, we concatenate equations~\eqref{eq:example_separable_return1}~\eqref{eq:example_alpha_model}~\eqref{eq:example_impact_return}~\eqref{eq:example_execution_price_model}~\eqref{eq:example_position_dynamic} into a single state-space. Introducing the \lq return states\rq
~$x^{dec}_{t}$, $x^{exe}_{t}$, the internal state $x_t$ and the observation vector $y_t$ as
\begin{equation}
x^{dec}_{t} =  r^{dec}_{t}, \hspace{5mm} x^{exe}_{t} = r^{exe}_{t}, \hspace{5mm}  x_{t} = \begin{pmatrix} Q_t \\ x^{dec}_{t} \\x^{exe}_{t} \\ x^p_{t} \\ x^i_t \end{pmatrix}, \hspace{5mm} y_t = \begin{pmatrix} Q_t \\ r_{t}^{dec} \\ r_{t}^{exe} \\ x_t^{p} \end{pmatrix},
\label{eq:example_internal_state}
\end{equation}
one gets the state-space formulation \eqref{eq:lqg_dynamic}
\begin{equation*}
\left\{
\begin{array}{lllr}
x_{t+1} &=& A x_t + B q_t + \epsilon_{t+1}^x ,& \epsilon_{t+1}^x \overset{iid}{\sim} \mathcal{N}(0,\Sigma^x), \\
y_{t} &=& C x_t + \epsilon_{t}^y, & \epsilon_{t}^y \overset{iid}{\sim} \mathcal{N}(0,\Sigma^y),
\end{array}
\right.
\end{equation*}
where
\begin{equation}
\begin{split}
&A:= \begin{pmatrix} 1 & 0 & 0 & 0 & 0 \\ 0 & 0 & 0 & 1 & 1 \\ 0 & 0 & 0 & \eta & (\eta - 1) \\ 0 & 0 & 0 & (1 - \omega_p) & 0 \\ 0 & 0 & 0 & 0 & (1 - \omega_i) \end{pmatrix}, \hspace{5mm} B := \begin{pmatrix} 1 \\ \gamma_i \\ \gamma_i ( \eta - 1) \\ 0 \\ \omega_i \beta_i \end{pmatrix}, \hspace{5mm} C := \begin{pmatrix} 1 & 0 & 0 & 0 & 0 \\ 0 & 1 & 0 & 0 & 0 \\ 0 & 0 & 1 & 0 & 0 \\ 0 & 0 & 0 & 1 & 0 \end{pmatrix} \\
&\Sigma^x := \begin{pmatrix} 0 & 0 & 0 & 0 & 0 \\ 0 & \sigma^2 & \eta \sigma^2 & 0 & 0 \\ 0 & \eta \sigma^2 & \eta^2 \sigma^2 & 0 & 0 \\ 0 & 0 & 0 & \beta_p^2 & 0 \\ 0 & 0 & 0 & 0 & 0 \end{pmatrix}, \hspace{5mm} \Sigma^y := 0_{4,4}.
\end{split}
\label{eq:dynamic_matrices}
\end{equation}
This example does not exhibit partial observability in the sense that the noise $\epsilon^y$ is degenerated, thus:
\begin{equation*}
\Omega_x = \mathbb{V} (x_t | \mathcal{F}_{t}) = 0, \hspace{5mm}  \mathbb{V}(y_{t+1} | \mathcal{F}_{t}) = C \Sigma^x C^T,
\end{equation*}
and we recover the usual Markowitz variance term which involves the returns unpredictable noise variance $\sigma^2$. Finally, 
thanks to the $PnL$ definition \eqref{eq:pnl_definition_return}, according to equations \eqref{eq:cost_matrices_expression} we obtain the following LQG cost matrices for the portfolio allocation dynamical problem \eqref{eq:dynamic_markowitz_criterion} with risk tuning parameter $\lambda$:
\begin{equation*}
Q := \begin{pmatrix} \lambda \sigma^2 & 0 & 0 & -1/2 & - 1/2 \\ 0 & & & & \\ 0 & & (0) & & \\ -1/2 & & & & \\ -1/2 & & & & \end{pmatrix}, \hspace{5mm} N :=  \begin{pmatrix} \lambda \eta \sigma^2 - 1/2 \gamma_i \\ 0 \\ 0 \\ 0 \\0 \end{pmatrix},\hspace{5mm} 
R := \lambda \eta^2 \sigma^2 - \gamma_i (\eta - 1).
\end{equation*}
If the aggregation of such a modeling is very convenient because it allows the use of the LQG theory, it may suffer from a lack of clarity and be difficult to analyze and understand. This is more and more obvious when the complexity increases as well as the dimension of the system. To overcome this issue, impulse response raises as a convenient tool since it quantifies how an impulse of the inputs (i.e. $\epsilon^x$ and $q$)  propagates through~\eqref{eq:lqg_dynamic} and modifies the outputs (i.e. $p$ and $\bar{p}$). We present and discuss it on figure \ref{fig:example_impulse_price_open} of section~\ref{ssec:graphical_analysis}.

\subsection{LQG solution}

We discuss first assumptions  ~\ref{as:separation_principle} and ~\ref{as:stabilizability} and present the LQG solution. By construction, $\{ \epsilon^x_{t} \}_{t}$ and $\{ \epsilon^y_{t}\}_{t}$ are martingale
difference sequences, conditionally Gaussian and mutually independent ($\epsilon^y_t$ is degenerated here). Hence, the Separation Principle holds and we can address both Kalman and LQR separately.
Looking at the matrices \eqref{eq:dynamic_matrices}, it is clear that both $(A, \Sigma^x)$ and $(A,B)$ are stabilizable pairs: indeed, the input noise $\epsilon^x$ excites the return and alpha part of the state which are based upon mean-reverting processes so $(A,\Sigma^x)$ is stable. On the other hand, to ensure the stabilizability of the control pair $(A,B)$ one just has to exhibit a control policy $q_t = K x_t$ such that the associated closed loop matrix $A + B K$ is asymptotically stable e.g. has all its eigenvalues in $(-1,1)$. Taking for instance $K = ( -0.5,0,0,0,0)$ proves the desired result \footnote{ Since $Q_t$ is the only unstable part of the system, any controller which stabilizes it makes the system stable.}.\\
Since the cost matrix $Q$ is not positive definite, we also have to check the non-arbitrage criterion of theorem~\ref{th:molinari_round_trip}. While this offers a qualitative tool, we adopt here a more practical approach. Because the existence of a symmetric stabilizing solution of the Riccati equation 
\eqref{eq:lqr_solution} implies its uniqueness, it suffices to solve it numerically and check afterward that the solution stabilizes the system. Moreover, we take advantage of efficient solvers using the Van Dooren method~\citep{van1981generalized} which are consistent with the existence results derived.\\

The Kalman filter equations \eqref{eq:kalman_state_space} applied to \eqref{eq:dynamic_matrices} gives
\begin{equation*}
\Omega_x = 0, \hspace{5mm}  \tilde{\Omega}_x = \Sigma^x + A \tilde{\Omega}_x A^\transp, \hspace{5mm} 
L = \begin{pmatrix} 0_{11} & 0_{13} \\ 0_{3,1} & I_{33} \\ 0_{11} & 0_{1,3} \end{pmatrix}, 
\end{equation*}
which implies that $LC = I_{55} - E_{55} - E_{11}$.\footnote{For sake of simplicity, we denote as $E_{ij}$ the zero matrix with only $1$ on the $ij$ position, and as $I_{nn}$ the identity matrix of size $n\times n$.} The estimated state $\hat{x}_t$ follows the dynamic:
\begin{equation}
\left\{
\begin{array}{lll}
\tilde{x}_{t+1} &=& A(E_{11} + E_{55}) \tilde{x}_{t} + B q_t + A L y_t, \\
\hat{x}_{t} &=& (E_{11} + E_{55} ) \tilde{x}_t + L y_t, \\
\end{array}
\right.
\label{eq:kalman_filter}
\end{equation}
where
\begin{equation*}
\hat{x}_{t} = [\tilde{Q}_t, r_{t}^{dec} , r_t^{exe}, x_t^p, \tilde{x}_t^i ]^T \hspace{2mm} \text{ and } \hspace{2mm} \begin{pmatrix} \tilde{Q}_{t+1} \\ \tilde{x}_{t+1}^i  \end{pmatrix} = \begin{pmatrix} 1 & 0 \\ 0 &(1 - \omega_i) \end{pmatrix} \begin{pmatrix} \tilde{Q}_t \\ \tilde{x}_t^i \end{pmatrix} + \begin{pmatrix} 1 \\ \omega_i \beta_i \end{pmatrix} q_t.
\end{equation*} 
Therefore, one sees that the Kalman filter directly copies the observed states ($r_{t}^{dec} , r_t^{exe}, x_t^p$) and reconstructs the unobservable yet non-stochastic hidden state $x_t^i$ (which requires a reconstruction of the observed state $Q_t$). Once the estimated state has been constructed one applies the optimal LQR control: let $P$ denote the (necessarily unique) stabilizing solution of the Riccati equation \eqref{eq:lqr_solution} and $K$ the associated optimal LQR controller, the optimal control is given by
$q_t = K \hat{x}_{t}$.

\subsection{Closed loop system analysis}

To analyze the behavior of the controlled system, we consider the augmented CLS \eqref{eq:closed_loop_system} with internal state $\mathcal{X}_t = \begin{pmatrix} x_t \\ \hat{x}_t \end{pmatrix}$. Now that the trades are fed back on the state, the impulse response of interest is the effect of $\epsilon^{x}$ on the price $p_{t}$ and the position $Q_t$. We present them in figure \ref{fig:example_closed_loop_response}.
To tune the risk-return trade-off of the strategy, it is of crucial importance for practitioners to quantify the impact of the $\lambda$ risk tuning parameter on the performance. Quantity of interest are naturally the average PnL, the average Risk and the average yearly Sharpe. Formally, one wants to compute for a given $\lambda$ 
\begin{equation}
\overline{PnL} := \mathbb{E}(PnL_{t,t+1}^\lambda), \hspace{3mm}
\overline{Risk} := \sqrt{ \mathbb{E}(\mathbb{V}(PnL_{t,t+1}^\lambda |\mathcal{F}_{t})) },  \hspace{3mm}
\overline{Sh} := \sqrt{250} \frac{\overline{PnL}}{\overline{Risk}}.
\label{eq:example_capacity_quantities}
\end{equation}

A naive approach is to make use of Monte Carlo techniques to approximate \eqref{eq:example_capacity_quantities} by
\begin{equation}
\overline{PnL}  \approx \frac{1}{T} \sum_{t=0}^{T-1} \mathbb{E}(PnL_{t,t+1}^\lambda |\mathcal{F}_{t}), \hspace{5mm}  
\overline{Risk}  \approx \sqrt{ \frac{1}{T} \sum_{t=0}^{T-1} \mathbb{V}(PnL_{t,t+1}^\lambda |\mathcal{F}_{t}) }
\label{eq:example_capacity_quantities_MC}
\end{equation}
for a reasonably large $T$. This procedure is time consuming especially when we consider a huge number of 
$\lambda$ values. Thanks to the LQG theory, it is possible however, to derive analytical formulas. Rewriting the PnL as function of 
the state $x_t$ and the estimated state $\hat{x}_{t}$ and noticing from equations \eqref{eq:mean_variance_pnl} that 
\begin{equation}
\begin{split}
\mathbb{E}(\mathbb{E}(PnL^\lambda_{t,t+1} | \mathcal{F}_{t} )) &=  \Tr \left( M_1 \mathbb{V}(\hat{x}_{t} ) \right),\\ 
\mathbb{E}(\mathbb{V}(PnL^\lambda_{t,t+1} | \mathcal{F}_{t} )) &= \Tr \left( M_2^\transp M_3 M_2 
\begin{pmatrix} C \mathbb{V}(x_t) C^\transp + \Sigma^y & C \mathbb{V}(\hat{x}_t) \\ \mathbb{V}(\hat{x}_t ) C^\transp \Sigma^y & \mathbb{V} (\hat{x}_t) \end{pmatrix} 
\right),
\end{split}
\label{eq:average_cost_closed_loop}
\end{equation}
where
\begin{equation*}
\begin{split}
M_1 &:=  [C^\transp \Pi^{Q,\transp} \Pi^{dec} + K^\transp \Pi^{exe}] C (A + B K), \\
M_2 &= \begin{pmatrix} \Pi^Q & 0 \\ 0 & K \end{pmatrix}, \\
M_3 &:= \begin{pmatrix} \Pi^{dec} \\ \Pi^{exe} \end{pmatrix} [C ( \Sigma^x +  A \Omega A^\transp )C^\transp + \Sigma^y ] \begin{pmatrix} \Pi^{dec,\transp} & \Pi^{exe,\transp} \end{pmatrix},
\end{split}
\end{equation*}

the complexity of computing \eqref{eq:average_cost_closed_loop} lies in the computation of $V := \mathbb{V}(x_t)$ and $\tilde{V} := \mathbb{V}(\hat{x}_t)$. Thanks to lemma \ref{le:lyapunov} this is achieved easily solving a Lyapunov equation. We plot the results on figure~\ref{fig:example_capacity_curves}.

\subsection{Numerical application and graphical analysis}
\label{ssec:graphical_analysis}

We consider the following parameters: we assume that the predictor has a characteristic time scale of $10$ units of time (say days) which corresponds to $\omega_p = 0.1$ and has an associated Markowitz strategy (e.g. without considering impact) with an annualized Sharpe $Sh_y$ of $3$. The Markowitz annualized sharpe is given by\begin{equation*}
Sh_y = \frac{\beta_p (1 - \omega_p) } {\sigma \sqrt{1 - (1- \omega_p)^2}} \sqrt{250}.
\end{equation*}
Therefore, using $\sigma = 2e-2$ and $\rho^{\alpha} = 0.9$, $Sh_y = 3$ corresponds to $\beta_p = 1.8e-3$.\\
We tune the impact parameters assuming a decay of characteristic time of 20 days which corresponds to $\omega_i = 0.2$. The $\gamma_i$ parameter controls the intensity of the price impact which is usually described by $\gamma_{i} = \frac{Y \sigma}{V_t}$ where $\sigma$ is the volatility of the stock, $V_t$ is the 
daily volume traded on the market that represents the liquidity and $Y$ is an adimensionate constant called Y-ratio of order unity.
 For sake of simplicity we set $V_t = 1$ without loss of generality ($q_t$ is then expressed in term of a fraction of the daily volume), and 
$Y = 3$ so that $\gamma_{i} = 0.06$. This means that a trade buying 1\% of the market volume impacts the price up by 6bps. We tune $\beta_{i}$ such that the price has a permanent impact of $20\%$: 
$\beta_{i} = -\gamma_{i} (1 - 0.2) = -0.0048$. Finally, we assume here that $\eta = 0.5$ for sake of simplicity.\\

To analyze the open loop system, we separately plot in figure \ref{fig:example_impulse_price_open} the impulse response of the decision price to the predictor noise (contained in $\epsilon^x$) and to the trade input $q$. As the execution price only matters when a transaction occurs, we cannot analyze its dynamics through impulse responses but focus on round-trip trajectories which make the reference price vanish (see equation~\eqref{eq:mtm_round_trip}). Figure ~\ref{fig:example_round_trip_price_open}  shows three different graphical representations of such a round trip. The last one is a parametric prices versus position graph and  stresses the hysteresis of the system due to the persistence (through decay and permanent impact) of past trades. The area intercepted by the execution price trajectory represents the PnL and is signed according to the circular motion: when it is clockwise, the sign is negative and the trajectory incurs a loss whereas it is a gain for anti-clockwise motion.\\
\begin{figure}[!ht]
\hspace{-1cm}
    \includegraphics[scale=0.4]{./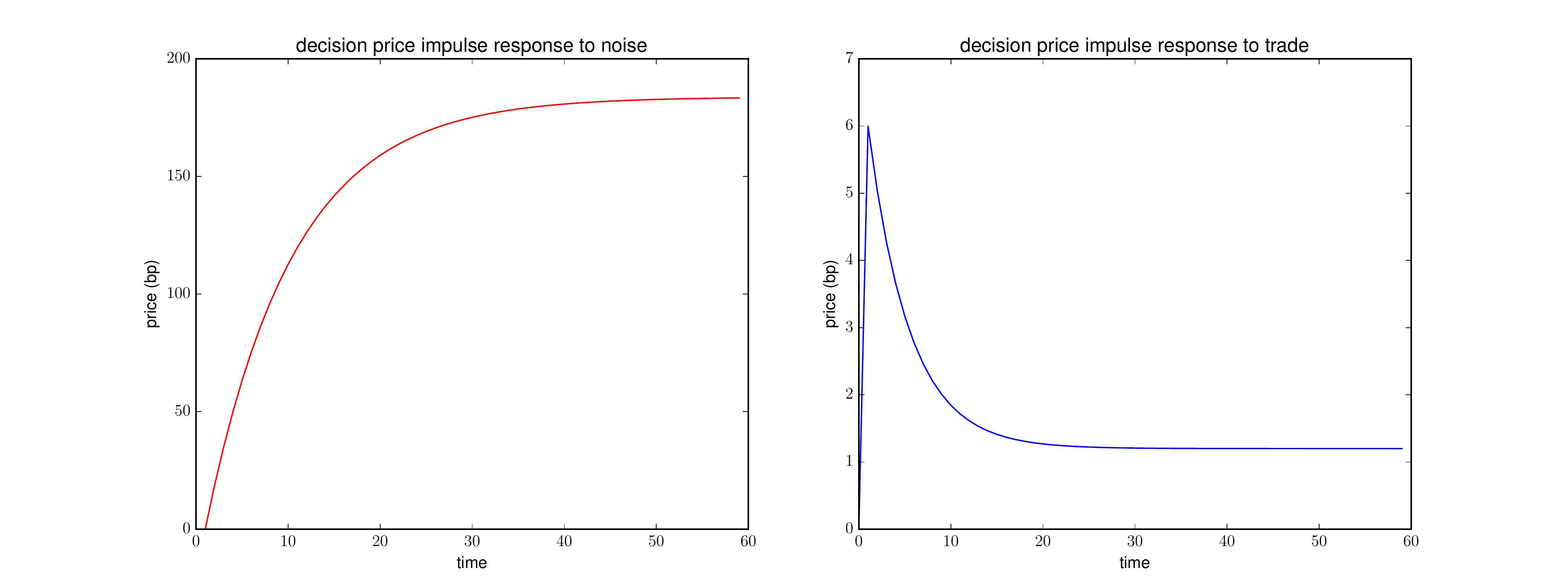}
      \caption{\label{fig:example_impulse_price_open} Impulse response of the decision price. \textit{Left:} An impulse of the predictor noise pushes the price up and the growth is exponential thanks to the auto-regressive modeling of the return. \textit{Right:} The impulse response of a trade summarizes the impact modeling. First, the price is pushed up by the instantaneous impact. Then it decreases exponentially. Finally, its terminal value is higher because of permanent impact.}
\end{figure}
\begin{figure}[!ht]
\hspace{-2.8cm}
  \includegraphics[scale=0.33]{./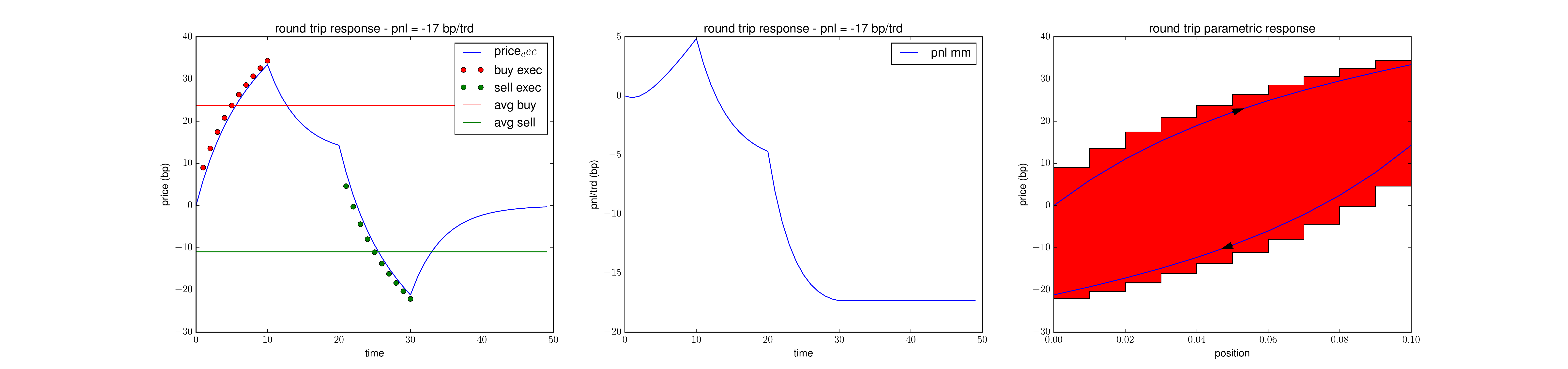}
    \caption{\label{fig:example_round_trip_price_open} Round trip response of the open loop system. We buy 1\% of the daily volume the first 10 days, do nothing the following 10 days and sell back the position the same way. \textit{Left:} Decision and execution prices trajectories. The decision price is pushed up by the purchase and exhibits a concavity induced by the relaxation of past trades. The execution is, on average, in between the two decision prices and augmented by a cost which decreases as the execution takes place. \textit{Center: } PnL trajectory. Since the impact effects act in an adverse manner, it makes the practitioner 'buy high, sell low' and hence, the global PnL is negative. If the PnL seems to rise as the purchase takes place, it is purely artificial and stresses the danger of local valuation. \textit{Right:}  Parametric prices versus position graph. The continuous blue line represents the decision price while the piecewise constant graph represents the execution price. The intercepted red area represents the PnL.}
\end{figure}
\newpage
The dynamics of the CLS, and hence the behavior of the controller, can again be analyzed through impulse responses. The input of interest is the predictor noise and the outputs are the prices and the position. We present it in figure~\ref{fig:example_closed_loop_response}. Moreover, since the stabilizing property of the optimal controller makes every impulse responses a round trip for the CLS, we also plot the hysteresis graph associated to it.\\
\begin{figure}[!ht]
\hspace{-2.8cm}
  \includegraphics[scale=0.33]{./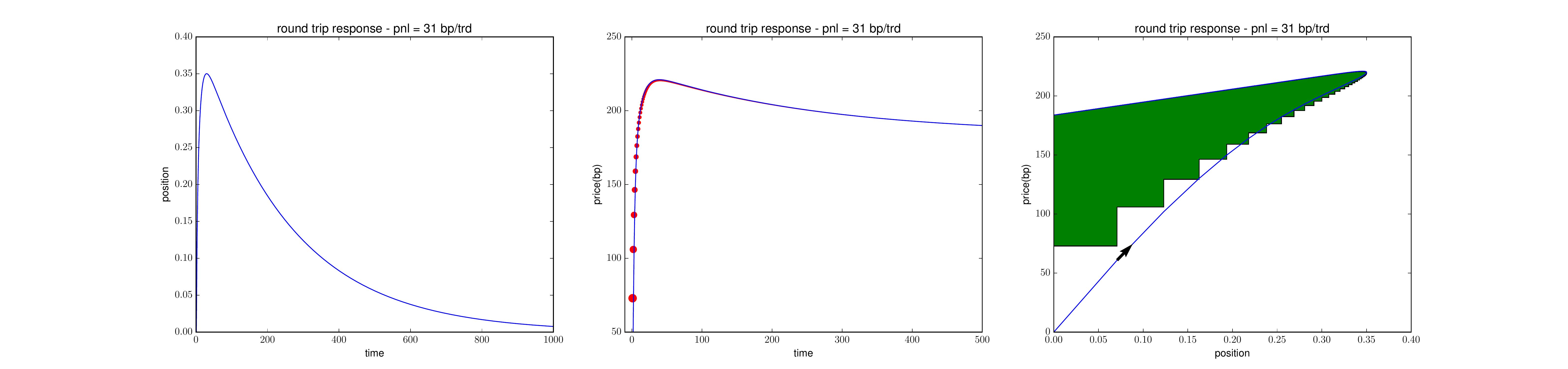}
    \caption{\label{fig:example_closed_loop_response} Impulse response of the closed loop system. \textit{Left:} Position trajectory. In order to capture the upper move of the price (induced by the predictor noise impulse), the position increases fast and decreases to zero very slowly to minimize the cost induced by price impact. The rate of the decay is the optimal trade-off between reducing the risk (selling the position) and keeping the transaction cost low (holding the position). \textit{Center:} 
    Decision and execution price trajectories. The exponential growth of the decision price (induced by the impulse of the predictor noise) is accelerated by the purchase. The permanent impact effect induces overshooting i.e. the price is pushed higher than its natural terminal value. \textit{Right:} Parametric prices versus position graph. The PnL is positive (the circular motion is anti-clockwise) since the round trip is induced by a predictable move of the price.}
    \end{figure}

The  $\lambda$ parameter (chosen by the investor) controls the risk-return trade-off of the optimal strategy. Intuitively, it affects the volume of the inventory position as well as the rate of the decay. This is stressed by figure \ref{fig:example_lambda_impulse_response} where we show the impulse response of the price and position 
for high and low value of $\lambda$. For sake of clarity, we neglect here the permanent impact effect (which induces overshooting). 
To analyze the influence of $\lambda$ on the average performance, thanks to \eqref{eq:average_cost_closed_loop}, we compute and draw PnL vs Risk and Sharpe vs Risk as parametric functions of $\lambda$. We also draw Monte Carlo estimate for trajectories of length $T=5000$.\\

\begin{figure}[!ht]
\hspace{-2.8cm}
  \includegraphics[scale=0.33]{./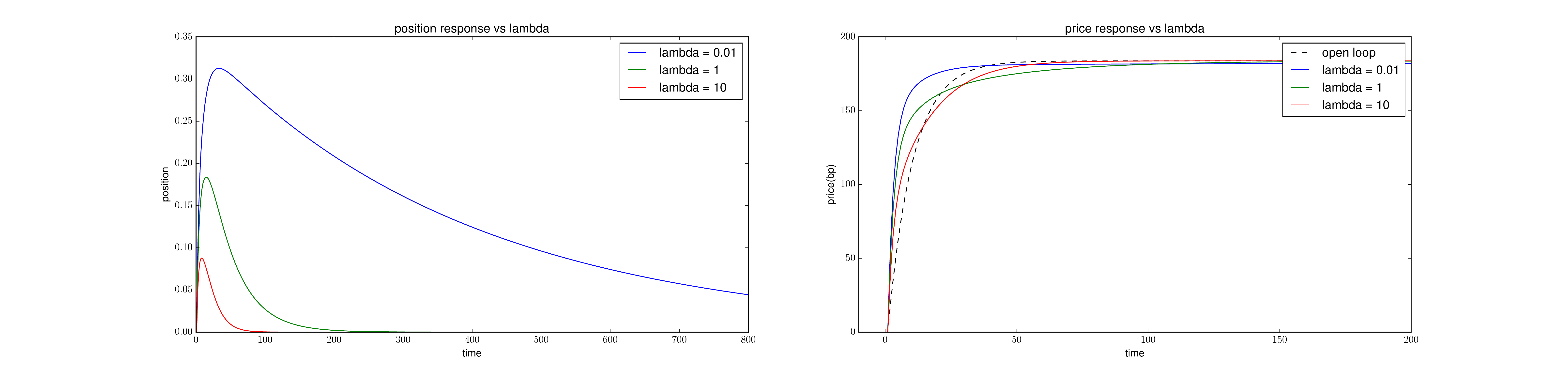}
    \caption{\label{fig:example_lambda_impulse_response} Sensibility to the $\lambda$ parameter. \textit{Left:} Position trajectories. The more aggressive (e.g. the less risk averse) the trader is, the larger inventory positions are and the slower the relaxation is. \textit{Right:} Decision price trajectories. Given 
a natural predictable move of the price (the black dashed line), the trading activity which tries to capture gain from this move 
acts as an arbitrage effect: the more aggressive the trader is, the quicker prices are pushed to their expected value.}
\end{figure}
\begin{figure}[!ht]
\hspace{-2.8cm}
\begin{center}
   \includegraphics[scale=0.33]{./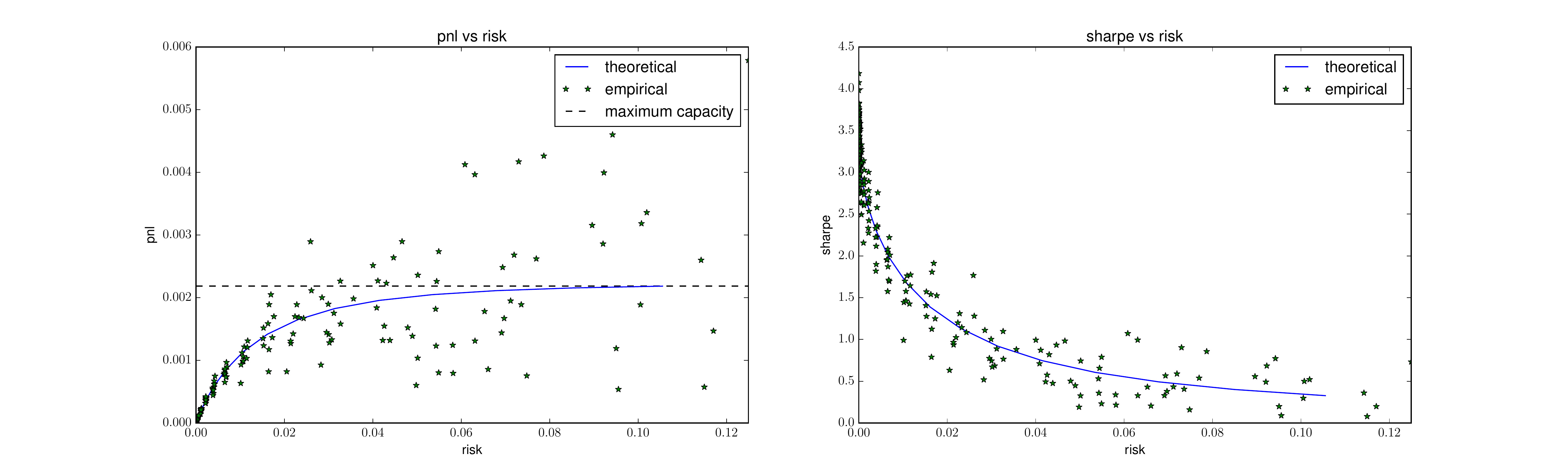}
\end{center}
     \caption{\label{fig:example_capacity_curves} Capacity curves. \textit{Left:} PnL versus Risk. As $\lambda$ decreases, the strategy becomes more and more aggressive. While the PnL increases with the level of risk, it reaches a limit value that we call the maximum capacity of the system: regardless how much risk the trader agrees to take, he cannot increase its gain because the amplitude of the transactions becomes so huge that price impact effects induce prohibitive costs. \textit{Right:} Yearly Sharpe versus Risk. 
The zero risk corresponds to the Markowitz Sharpe of 3 which is unreachable because of price impact: as soon as the trader invests in the strategy (e.g. 
takes more and more risk) the Sharpe is deteriorated by the adverse moves of the price. This explains the decreasing property of the graph.}
\end{figure}

We presented in this toy example how to optimally control a portfolio when the price impact exhibits impact decay and permanent impact. The use 
of the LQG theory allowed us to focus only on the price modeling as the optimal control is provided by the Riccati equation. The obtained 
controller however involves non-trivial quantities that are no longer explicit. To overcome the lack of understanding of such an implicit 
solution, we provided tools to analyze the open loop as well as the closed loop system.

\newpage
\section*{Conclusion}
\label{sec:conclusion}
We have presented here a complete framework to generically solve dynamical Markowitz allocation problem in the presence of return predictability and impact with partial observability - when the prices modeling is encoded into a linear state-space. We derive this formulation based on an accounting formula for the PnL which offers the advantage of concentrating the modeling part into the price dynamics. This approach induces naturally transaction costs since the impact effects are taken into account in the dynamical model. The optimal strategy is then computed using standard solvers from the LQG theory. While explicit formulas are no longer available in general, we introduce tools - both graphical and theoretical - to analyze the solution.\\
Our approach is based upon the LQG and the state-space theory which allows various developments. While we focus in this article on a minimal state-space formulation for the sake of simplicity, several extensions can be solved in the same way. For instance, it is possible to correlate the noise sequences $\epsilon^x$ and $\epsilon^y$, to implement non-exponential decay for the alpha and impact, to merge the alpha and impact modeling instead of considering it separately etc...
Additionally, the partial observability feature allows - among other things - to study the robustness of the allocation policy. Thanks to generic state-space operation (serialization, feedback) it is possible to plug a LQG controller into any system which shares the same input/output structure. Investors can therefore analyze how their controller - based on their belief - would behave in a different market model. Finally, our approach can be easily implemented using existing control library, taking advantage of state-space formulation together with powerful Riccati solvers.\\
We focus here on solving the allocation problem, assuming that the investor has access to the model of prices. However, as opposed to classical physics where dynamics are ruled by laws of nature, financial markets dynamics has to be estimated and calibrated. This open and difficult problem can be studied in this LQG framework and leads to interesting future developments. $H_\infty$ theory and robust control could address this issue in a pessimistic manner, while reinforcement learning technics could be used to learn and control simultaneously. We will discuss the latter in a forthcoming paper.

\section*{Acknowledgements}

The authors would like to warmly thank J-P. Bouchaud, J. Bun, J. Donier, C.A. Lehalle and G. Zerah for insightful discussions and remarks.


\bibliographystyle{plainnat}
\bibliography{biblio}

\appendices
\section{Derivation of the Closed-Loop System}

We detail here the derivation of the CLS which encodes the global dynamics of the system~\eqref{eq:lqg_dynamic} once controlled with an LQG controller. Formally, recalling~\eqref{eq:lqg_dynamic}, one has:
\begin{equation*}
 \left\{
\begin{array}{llll}
  x_{t+1} &=& A x_t + B q_t + \epsilon_{t+1}^x, & \epsilon_{t+1}^x \sim \mathcal{N}(0,\Sigma_x) \\
  y_{t} &=&  C x_t + \epsilon_{t}^y, & \epsilon_{t}^y \sim \mathcal{N}(0,\Sigma_y) 
\end{array}\right.
\end{equation*}
while the LQG controller consists in the Kalman estimation state-space~\eqref{eq:kalman_state_space}:
\begin{equation*}
\left\{
\begin{array}{lll}
\tilde{x}_{t+1} &=& A (I - LC) \tilde{x}_{t} + B q_t + A L y_t, \\
\hat{x}_{t} &=& (I - LC) \tilde{x}_{t} + L y_t,
\end{array}\right.
\end{equation*}
together with the LQR control policy $q_t = K \hat{x}_t$. First, we merge the two state-space to reformulate the dynamics of the estimated state $\hat{x}_t$:
\begin{equation*}
\begin{split}
\hat{x}_{t+1} &= (I - L C) \tilde{x}_{t+1} + L y_{t+1} \\
&= (I - LC) (A \hat{x}_t + B K \hat{x}_t ) + L ( C x_{t+1} + \epsilon^y_{t+1}) \\
& = (I - LC)(A + BK) \hat{x}_{t} + LCAx_{t} + LC B K \hat{x}_t + LC \epsilon^x_{t+1} + L \epsilon^y_{t+1} \\
&= (A + BK) \hat{x}_{t} + LCA(x_t - \hat{x}_{t} ) + LC \epsilon^x_{t+1} + L \epsilon^y_{t+1} \\
\end{split}
\end{equation*}
and we add it to the state-space~\eqref{eq:lqg_dynamic}. Formally, we define the augmented state $\mathcal{X}_{t} = \begin{pmatrix} x_t \\ \hat{x}_t \end{pmatrix}$ whose dynamics is encoded into~\eqref{eq:closed_loop_system}:
\begin{equation*}
\left\{
\begin{aligned}
\mathcal{X}_{t+1} &= \mathcal{A} \mathcal{X}_{t} + \mathcal{B} \mathcal{E}_{t+1}, \\
y_t &= \begin{pmatrix} C & 0 \end{pmatrix} \mathcal{X}_{t} + \epsilon_{t}^y,
\end{aligned}
\right.
\end{equation*}
where 
\begin{equation*}
\mathcal{E}_{t} = \begin{pmatrix} \epsilon^x_{t} \\ \epsilon^y_{t} \end{pmatrix}, \hspace{5mm} 
\mathcal{B} = \begin{pmatrix} I & 0 \\ LC & L \end{pmatrix}, \hspace{5mm}
\mathcal{A} = \begin{pmatrix} A & BK \\ LCA & A+BK - LCA \end{pmatrix}.
\end{equation*}
This derivation is a specific case of a generic property of state-spaces: the structure is stable under \lq standard\rq ~operations such as series, feedback and parallelization. This stresses the advantage of the state-space formulation since we can summarize the dynamics of the CLS in a single object which turns to be just a bigger state-space. Notice that several formulations can be derived, depending on the choice of the internal state, the choice of the output etc... 

\section{Proof of theorem~\ref{th:molinari_round_trip}}
\label{app:proof_non_arbitrage}

We derive here the proof of theorem~\ref{th:molinari_round_trip} which maps the existence and uniqueness guarantee of the LQR solution to a non-arbitrage criterion. The proof is structured as follow: first, we present the Popov criterion (see~\cite{molinari1975}) which guarantee the existence and uniqueness of a solution to the Riccati equation. Second, we show that the deterministic and stochastic LQR share the same Riccati equation and hence, share the same existence and uniqueness condition. Then, we translate the Popov frequency domain criterion in terms of the cost function of the deterministic LQR and thus, in terms of non-arbitrage for admissible trade sequence. Finally, we show that the set of admissible trade sequence is the set of round-trip sequence.\\

\subsection{The Popov criterion}
Since the cost matrix $\begin{pmatrix} Q & N \\ N^\transp & R \end{pmatrix}$ associated with the LQR problem for portfolio construction presented in Section~\ref{sec:from-portf-contr} is not positive definite by construction, the usual guarantee for the Riccati equation solution is violated here. However, the existence of a unique admissible solution of~\eqref{eq:lqr_solution} can still be provided using the Popov criterion. Introducing the hermitian matrix:
\begin{equation}
\Phi(z) = \begin{pmatrix} (Iz^{-1} - A)^{-1} B \\ I \end{pmatrix}^\prime 
\begin{pmatrix} Q & N^\prime \\ N & R \end{pmatrix} \begin{pmatrix} (Iz - A)^{-1} B \\ I \end{pmatrix},
\label{eq:popov_definition}
\end{equation}
and 
\begin{equation}
\begin{split}
\Phi_K(z) &= Y_K^\prime (z^{-1}) \Phi(z) Y_K(z),\\
Y_K(z) &= I + K ( Iz - A - B K)^{-1} B,
\end{split}
\label{eq:popov_stable_definition}
\end{equation}
from~\citep{molinari1975}, we have the following theorem:

\begin{theorem}
\label{th:molinari}
 Assume that the pair $(A,B)$ is stabilizable 
  then there exists a (necessarily) unique symmetric stabilizing solution $P$ satisfying 
\eqref{eq:lqr_solution} if and only if for some (and hence all) $K$ such that $A + B K$ is asymptotically stable, 
$\Phi_K(z) > 0$ for all $|z| = 1$, $z \in \mathbb{C}$.
\end{theorem}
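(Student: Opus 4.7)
The plan is to reduce Theorem~\ref{th:molinari_round_trip} to the Popov-type criterion of Molinari (Theorem~\ref{th:molinari} stated in the appendix), which equates existence and uniqueness of a stabilizing Riccati solution with a frequency-domain positivity condition on $\Phi_K$. The remaining work is then to (a) specialize the LQG cost matrix to the non-stochastic regime and check that it is $\lambda$-independent, (b) translate the Popov frequency-domain inequality into a time-domain statement via Parseval, and (c) identify the admissible class of signals in that translation with $\mathcal{RT}$.

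First, I would specialize~\eqref{eq:cost_matrices_expression} to the non-stochastic setting: with $\Sigma^x = \Sigma^y = 0$ one has $\Omega_x = 0$, hence $\Sigma = 0$, so the $\lambda \Sigma$ block disappears and the cost matrix is $\lambda$-independent, which justifies the quantifier "for any $\lambda \in (0,\infty)$" in the statement. A direct expansion using $Q_t = \Pi^Q C x_t$, $r^{dec}_{t+1} = \Pi^{dec} C (A x_t + B q_t)$, $r^{exe}_{t+1} = \Pi^{exe} C (A x_t + B q_t)$ then yields the pointwise identity
\begin{equation*}
\begin{pmatrix} x_t^\transp & q_t^\transp \end{pmatrix} \begin{pmatrix} Q & N \\ N^\transp & R \end{pmatrix} \begin{pmatrix} x_t \\ q_t \end{pmatrix} = -\, PnL_{t,t+1},
\end{equation*}
so that the deterministic LQR cost summed along a trajectory coincides with $-\sum_t PnL_{t,t+1}$.

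Second, I would invoke Theorem~\ref{th:molinari}: under stabilizability of $(A,B)$, existence of a unique symmetric stabilizing solution $P$ of~\eqref{eq:lqr_solution} is equivalent to $\Phi_K(e^{i\omega}) \geq 0$ for all $\omega \in \mathbb{R}$ and any stabilizing $K$. Fix such a $K$ and write the closed-loop dynamics with a perturbation $q_t = K x_t + u_t$, $x_0 = 0$. By Parseval applied to the stable LTI system $Y_K(z)^{-1}$, for any $u \in \ell^2$
\begin{equation*}
\sum_{t=0}^{\infty} \begin{pmatrix} x_t^\transp & q_t^\transp \end{pmatrix} \begin{pmatrix} Q & N \\ N^\transp & R \end{pmatrix} \begin{pmatrix} x_t \\ q_t \end{pmatrix} = \frac{1}{2\pi} \int_{-\pi}^{\pi} \hat{u}(e^{i\omega})^{*} \Phi_K(e^{i\omega}) \hat{u}(e^{i\omega})\, d\omega,
\end{equation*}
and by the identity of the previous step this sum equals $-\sum_{t \geq 0} PnL_{t,t+1}$.

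Third, I would verify that the $\ell^2$ inputs $u$ parameterize exactly the trade sequences in $\mathcal{RT}$. Since $A + BK$ is stable, the map $u \mapsto x$ is a bounded bijection between $\ell^2$ and $\ell^2$-valued state trajectories that decay to zero; in particular $Q_t = \Pi^Q C x_t \to 0$, so $q = Kx + u$ lies in $\mathcal{RT}$. Conversely, any $q \in \mathcal{RT}$ generates a decaying state trajectory $x$ and $u := q - Kx$ belongs to $\ell^2$. Combining this bijection with the Parseval identity yields the equivalence $\Phi_K \geq 0$ on $|z| = 1$ iff $\sum_{t \geq 0} PnL_{t,t+1} \leq 0$ for all $q \in \mathcal{RT}$, which together with Molinari's theorem closes the proof.

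The main obstacle will be reconciling strict and non-strict inequalities: Molinari's criterion as written in~\eqref{eq:popov_stable_definition}--Theorem~\ref{th:molinari} uses strict positivity $\Phi_K > 0$ to enforce uniqueness, whereas the theorem under proof uses the weak inequality $\sum PnL \leq 0$. The likely resolution is to appeal to the fact that the \emph{stabilizing} Riccati solution is the maximal symmetric one, and that in the stabilizable setting the weak frequency condition suffices for both existence and uniqueness modulo degenerate directions killed by detectability arguments. A secondary technical issue is convergence: the $\ell^1 \cap \ell^2$ decay imposed in Definition~\ref{def:admissible_round_trip} is exactly what is needed to make $\sum PnL_{t,t+1}$ absolutely convergent and the $z$-transforms well defined on $|z|=1$, so Parseval can be applied without boundary pathologies.
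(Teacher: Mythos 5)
Your proposal does not prove the statement at hand. The statement is Theorem~\ref{th:molinari} itself, i.e.\ Molinari's Popov-type criterion: under stabilizability of $(A,B)$, a (necessarily unique) symmetric stabilizing solution of the Riccati equation~\eqref{eq:lqr_solution} exists if and only if $\Phi_K(z)>0$ on the unit circle for some (hence all) stabilizing $K$. The paper does not prove this result; it imports it verbatim from \citet{molinari1975}. A self-contained proof would have to develop the existence theory of the discrete algebraic Riccati equation with an indefinite cost matrix --- for instance via the associated symplectic matrix pencil and the dimension and complementarity of its stable deflating subspace, or via a spectral factorization of $\Phi(z)$ into stable and anti-stable factors --- and none of that machinery appears in your write-up. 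On the contrary, your second step explicitly \emph{invokes} Theorem~\ref{th:molinari} as a black box, so as a proof of that theorem your argument is circular.

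What you have actually sketched is the paper's appendix proof of the \emph{other} theorem, Theorem~\ref{th:molinari_round_trip}: reduce the stochastic LQR to the deterministic one, identify the running cost with $-PnL_{t,t+1}$, use Parseval and the $z$-transform to convert the frequency-domain condition $\Phi_K(z)>0$ into positivity of $\tilde{J}$ over the admissible controls, and identify that admissible class with $\mathcal{RT}$ (the paper does this last step via the final value theorem; your bounded-bijection argument $u\mapsto x$ under a stabilizing $K$ is essentially Lemma~\ref{le:admissible_control_sequence}). That part tracks the paper's own route closely, and your observation that the strict inequality $\Phi_K(z)>0$ does not match the weak inequality $\sum_t PnL_{t,t+1}\le 0$ in Theorem~\ref{th:molinari_round_trip} points at a genuine looseness in the paper itself. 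But none of this addresses the statement you were asked to prove; to do so you would need to either reproduce Molinari's argument or give an independent derivation of the Popov criterion.
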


This frequency-domain criterion guarantees the global convexity of the problem based on a fairly complete existence theory 
(see~\cite{molinari1975, ionescu1997general, van1981generalized, wimmer1984algebraic}). The main drawback however is the use of the frequency-domain 
method involved. To get a better intuition about the existence of optimal solution, we link here the Popov criterion to a non-dynamical arbitrage 
criterion in line with~\citep{gatheral2010no}.\\

\subsection{Deterministic LQR}

First, let's notice that the LQR solution of \eqref{eq:lqr_problem} involves the same Riccati equation  (see~\cite{bertsekas1995dynamic}) - and hence shares the same conditions - than the deterministic LQR problem \eqref{eq:lqr_problem_deterministic}:

\begin{equation}
 \begin{aligned}
\underset{\{q_t\}_{t=1,\dots,\infty} \in \mathcal{Q}}{\text{minimize}} \hspace{3mm} \tilde{J}(q_0,q_1,\dots)  &:= \sum_{t=0}^{\infty} 
x_{t}^\prime Q x_{t} + 2 x_{t}^\prime N q_t + q_t^\prime R q_t,  \\
\text{subject to} \hspace{3mm} &
\begin{array}{lll}
  x_{t+1} &=& A x_{t} + B q_{t},\\
\end{array}&
\end{aligned}
\label{eq:lqr_problem_deterministic}
\end{equation}
 where $\mathcal{Q} := \{ q = (q_0,q_1,\dots) \in l^1 \cap l^2 \hspace{3mm}  
\text{such that} \hspace{3mm} x = (x_0,x_1,\dots) \in l^1 \cap l^2 \}$ is the admissible control 
space which are the stabilizing sequences. \\
Indeed, the stochastic LQR problem cost function is defined with an expectation regarding the noise process $\process{\epsilon^x_t}$ which is of zero conditional mean. Thanks to the linear structure of the dynamics, the noises vanish within the Bellman equation which is then the same as the one of the deterministic LQR. As a result, we can apply the Popov criterion on the deterministic problem to ensure the existence and uniqueness of a solution to the stochastic one.

\subsection{From frequency to time domain}

We now state the first corollary which derives directly from theorem \ref{th:molinari}:
\begin{corollary}
\label{co:molinari_stabilizing}
 Assume that the pair $(A,B)$ is stabilizable 
  then there exists a (necessarily) unique symmetric stabilizing solution $P$ satisfying 
\eqref{eq:lqr_solution} if and only if $\tilde{J}(q_0,q_1,\dots) > 0$ for any $q \in \mathcal{Q}$.
\end{corollary}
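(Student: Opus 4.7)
The plan is to translate the Popov frequency-domain criterion of Theorem~\ref{th:molinari} into a time-domain positivity condition on $\tilde{J}$ by combining a stabilizing change of variables with Parseval's theorem on the unit circle.

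First, using stabilizability of $(A,B)$, fix a gain $K$ such that $A + BK$ is asymptotically stable, and perform the substitution $v_t := q_t - K x_t$. The dynamics rewrite as $x_{t+1} = (A + BK) x_t + B v_t$, a stable system driven by $v$. With $x_0 = 0$ (the relevant setting for the Riccati existence argument), stability of $A+BK$ makes the map $v \mapsto (x, q)$ a bijection between $\ell^2$ sequences $v$ and admissible pairs in $\mathcal{Q}$: if $v \in \ell^2$ then $x \in \ell^2$ by the stable transfer function, hence $q = Kx + v \in \ell^2$; conversely any $q \in \mathcal{Q}$ yields $v = q - Kx \in \ell^2$, and the decay of $(A+BK)^t$ upgrades $\ell^2$ to $\ell^1 \cap \ell^2$ whenever $v$ decays sufficiently, so that $\mathcal{Q}$ is dense in the class of stabilizing controls.

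Second, apply the discrete-time Z-transform. With $x_0 = 0$, one has $\hat{x}(z) = (Iz - A)^{-1} B \hat{q}(z)$ and $\hat{q}(z) = Y_K(z) \hat{v}(z)$, using the standard closed-loop identity $(Iz - A)^{-1} B\, Y_K(z) = (Iz - A - BK)^{-1} B$. Parseval's theorem then gives
\begin{equation*}
\tilde{J}(q) \;=\; \frac{1}{2\pi} \int_{-\pi}^{\pi} \hat{v}(e^{i\theta})^{*} \, \Phi_K(e^{i\theta}) \, \hat{v}(e^{i\theta}) \, d\theta,
\end{equation*}
where $\Phi_K$ is precisely the matrix defined in~\eqref{eq:popov_stable_definition}. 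The cancellation encoded in the identity above ensures $\Phi_K$ is continuous on the unit circle even though $(Iz - A)^{-1}$ alone may not be. From this integral representation, the equivalence is essentially immediate: if $\Phi_K(e^{i\theta}) > 0$ uniformly on the circle, then the integrand is nonnegative and vanishes only when $\hat{v} \equiv 0$, so $\tilde{J}(q) > 0$ for every nonzero $q \in \mathcal{Q}$. Conversely, if $\Phi_K(e^{i\theta_0})$ has a nonpositive eigenvalue at some $\theta_0$, one constructs a test sequence $\hat{v}$ localized in a small arc around $\theta_0$ and aligned with the offending eigenvector, producing $\tilde{J}(q) \le 0$ for the corresponding $q \in \mathcal{Q}$. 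Combined with Theorem~\ref{th:molinari}, this yields the claimed equivalence.

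The main obstacle will be the construction in the converse direction: the test function $\hat{v}$ must be concentrated sharply enough in frequency to detect pointwise loss of positivity, yet its inverse Z-transform must lie in $\ell^1 \cap \ell^2$ (and not merely $\ell^2$) to be admissible for $\mathcal{Q}$. This is handled by approximating a delta-like kernel around $\theta_0$ by a smooth frequency bump, whose inverse transform is a rapidly decaying sequence, and then exploiting the analyticity of $Y_K^{-1}$ in a neighborhood of $|z|=1$ to pull the resulting $q$ back into $\mathcal{Q}$. The remainder of the argument is routine once the frequency-domain representation of $\tilde{J}$ is established.
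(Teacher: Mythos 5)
Your proposal follows essentially the same route as the paper's: the stabilizing substitution $q_t = K x_t + v_t$ (the paper's Lemma~\ref{le:admissible_control_sequence}), the Parseval/z-transform identity $\tilde{J}(q) = \oint_{|z|=1} v^\transp(z^{-1}) \Phi_K(z) v(z)$, and the appeal to Theorem~\ref{th:molinari} to pass between frequency-domain positivity of $\Phi_K$ and positivity of $\tilde{J}$ on $\mathcal{Q}$. The only difference is that you sketch how to establish the converse implication (a frequency-localized test sequence when $\Phi_K$ loses positivity, with care to keep it in $\ell^1 \cap \ell^2$), a step the paper simply asserts as part of its chain of equivalences.
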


The proof is straightforward using the z-transform theory. Let $q \in \mathcal{Q}$ be any admissible control sequence and denote as $q(z)$ and $x(z)$ the z-transform of the control sequence and associated state sequence respectively. Then, applying the z-transform theory to \eqref{eq:lqr_problem_deterministic} and using Parceval's theorem leads to:
\begin{equation}
\tilde{J}(q) = \oint_{|z| = 1} q^\transp(z^{-1}) \Phi(z) q(z) dz.
\label{eq:parceval_q}
\end{equation}
The following lemma provides another description for the admissible sequence $q$:

\begin{lemma}
\label{le:admissible_control_sequence}
$q \in \mathcal{Q}$ if and only if there exists a stable control $K$ e.g. such that $A + BK$ is stable, and a sequence $v \in l^1 \cap l^2$ such that 
\begin{equation*}
q_t = K x_t + v_t, \hspace{5mm} \forall t \geq 0.
\end{equation*}
\end{lemma}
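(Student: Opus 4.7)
The plan is to prove both implications by exploiting two elementary facts: $\ell^1 \cap \ell^2$ is a vector space stable under bounded linear maps, and convolution with a geometrically decaying kernel sends $\ell^1 \cap \ell^2$ into itself. The role of the stabilizability assumption on $(A,B)$ is to provide the stable feedback $K$ used in both directions.

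For the forward implication ($\Rightarrow$), I would invoke Assumption~\ref{as:stabilizability} to fix any $K$ with $A+BK$ asymptotically stable, and simply define $v_t := q_t - K x_t$. By definition of $\mathcal{Q}$, both $q$ and the associated state trajectory $x$ lie in $\ell^1 \cap \ell^2$; since $K$ acts coordinatewise as a bounded linear operator, $Kx \in \ell^1 \cap \ell^2$ as well, and the difference $v = q - Kx$ inherits membership in $\ell^1 \cap \ell^2$. The decomposition $q_t = K x_t + v_t$ then holds by construction.

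For the reverse implication ($\Leftarrow$), suppose $K$ stabilizes $(A,B)$ and $v \in \ell^1 \cap \ell^2$. Plugging $q_t = K x_t + v_t$ into the state recursion yields the closed-loop dynamics $x_{t+1} = (A+BK) x_t + B v_t$, whose variation-of-constants form reads
$$x_t = (A+BK)^t x_0 + \sum_{s=0}^{t-1} (A+BK)^{t-1-s} B v_s.$$
Asymptotic stability of $A+BK$ yields $\|(A+BK)^t\| \leq C \rho^t$ for some $\rho \in (0,1)$ and $C > 0$. The homogeneous term therefore decays geometrically and is in $\ell^1 \cap \ell^2$, while the forced term is the convolution of a geometrically decaying kernel with an $\ell^1 \cap \ell^2$ signal, which remains in $\ell^1 \cap \ell^2$ by Young's convolution inequality. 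Hence $x \in \ell^1 \cap \ell^2$, and then $q = Kx + v$ is the sum of two $\ell^1 \cap \ell^2$ sequences, so $q \in \mathcal{Q}$.

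The argument is essentially a bookkeeping exercise: the substantive content is entirely in the stabilizability hypothesis, which is what makes both constructions feasible. The only mild care point is the initial condition $x_0$ in the backward direction, but geometric decay of $(A+BK)^t$ renders the homogeneous contribution summable independently of $x_0$, so no extra assumption is needed. Consequently, the characterization identifies $\mathcal{Q}$ as the affine image, under the bijection $v \mapsto q = Kx+v$, of the ``free'' admissible perturbations $\ell^1 \cap \ell^2$ — which is precisely the parametrization needed later to rewrite $\tilde{J}$ as an integral against $\Phi_K$ over the unit circle.
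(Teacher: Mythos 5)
Your proof is correct and follows essentially the same route as the paper's: define $v = q - Kx$ for the forward direction, and use stability of $A+BK$ to propagate $\ell^1\cap\ell^2$ membership from $v$ to $x$ in the reverse direction. You merely make explicit (via variation of constants and Young's inequality) the summability step that the paper asserts without detail, and you correctly include the factor $B$ in the closed-loop recursion that the paper's proof omits as a typo.
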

\begin{proof}
Let $K$ be a stable control and define $v_t = q_t - K x_t$ for all $t \geq 1$. By definition of $\mathcal{Q}$, $q$ and $x$ belong to $l^1 \cap l^2$ and so does $v$.\\ On the other hand, let $v \in l^1 \cap l^2$, $K$ be a stable control and define $q_t = K x_t + v_t$ for all $t \geq 1$. Then, $x_{t+1} = (A + BK) x_t + v_t$ and since $A + BK$ is stable, $v \in l^1 \cap l^2$ implies that $x \in l^1\cap l^2$ and so does $q$.
\end{proof}

We make use of Lemma~\ref{le:admissible_control_sequence} to rephrase~\eqref{eq:parceval_q} in terms of $v$ sequence: for any $q \in \mathcal{Q}$, let $K$ be a stable control and $v \in l^1\cap l^2$ sequence such that $q_t = K x_t + v_t$. Denoting as $v(z)$ the z-transform of $v$, the z-transform $q(z)$ is:
\begin{equation*}
q(z) = Y_K(z) v(z).
\end{equation*}
Finally, equation~\eqref{eq:parceval_q} becomes:
\begin{equation}
\tilde{J}(q) = \oint_{|z| = 1} q^\transp(z^{-1}) \Phi(z) q(z) dz = \oint_{|z| = 1} v^\transp(z^{-1}) \Phi_K(z) v(z).
\label{eq:parceval_v}
\end{equation}
Therefore, rephrasing theorem~\ref{th:molinari}, there exists a unique symmetric stabilizing solution to the Riccati equation if and only if for some (and hence all) $K$ such that $A+ BK$ is stable $\Phi_K(z) > 0$ for all $|z| =1$ if and only if $\oint_{|z| = 1} v^\transp(z^{-1}) \Phi_K(z) v(z) > 0$ for all $v \in l^1 \cap l^2$ if and only if $\tilde{J}(q) > 0$ for any $q \in \mathcal{Q}$.\\

\subsection{From admissible trade sequence to round-trip}

Finally, to prove theorem~\ref{th:molinari_round_trip}, one just has to show that the set of admissible sequence coincides with the one of round-trip trajectories. Recalling  definition \eqref{def:admissible_round_trip}, one has $$\mathcal{RT} := \{ q = (q_0,q_1,\dots) \in l^1 \cap l^2 \hspace{3mm} \text{such that} \hspace{3mm} Q = (Q_0,Q_1,\dots) \in l^1 \cap l^2\}.$$
By definition, $\mathcal{Q} \subset \mathcal{RT}$ so we just need to prove that for any $q \in \mathcal{RT}$, the associated state sequence is such that $x \in l^1\cap l^2$. To do so, we denote as before $q(z)$, $Q(z)$ and $x(z)$ the z-transform of $\process{q_t}$, $\process{Q_t}$ and $\process{x_t}$ respectively. Then one has:
\begin{equation*}
\begin{split}
x(z) &= (Iz - A)^{-1} B q(z), \\
x(z) &= (Iz - A)^{-1} B (z-1) Q(z).
\end{split}
\end{equation*}
Multiplying by $(z-1)$ and taking the limit when $z \rightarrow 1$ one has, since $(z-1) (Iz-A)^{-1}$ converges to a constant matrix (finite) $H$:
\begin{equation*}
(z-1) (Iz-A)^{-1} \underset{z \rightarrow 1}{\longrightarrow} H < \infty \hspace{5mm} \Longrightarrow \hspace{5mm}
(z-1) x(z)  \underset{z \rightarrow 1}{\sim} H (z-1) Q(z).
\end{equation*}
Thanks to the final value theorem, one gets $x_t \underset{t \rightarrow \infty} {\sim} Q_t$ and since $Q \in l^1 \cap l^2$ by definition of $\mathcal{RT}$ so does $x$. As a consequence, $\mathcal{Q} = \mathcal{RT}$.

\subsection{Plugging everything together}

Thanks to the previous steps, we have that there exists a unique solution to the Riccati equation (provided that $(A,B)$ is stabilizable) if and only if, for any $q \in \mathcal{RT}$, $\tilde{J}(q) > 0$. Noticing that in the absence of noise, the cost function is equal, in term of portfolio allocation to
\begin{equation*}
\tilde{J}(q_0,q_1,\dots)  = \sum_{t=0}^{\infty} - PnL_{t,t+1} 
\end{equation*}
proves theorem~\ref{th:molinari_round_trip}. Because in the absence of noise there is no price predictability, this criterion states that every round-trip must be non-profitable so that impact effects are modeled to act in an adverse manner. Under such guarantee, a unique solution to the portfolio allocation exists.

\end{document}